\documentclass[12pt]{article}
\usepackage{amsthm,amssymb,amsfonts,epsfig}
\tolerance=10000

\textheight=22truecm
\textwidth=16truecm
\hoffset=-2.0truecm
\voffset=-2.3truecm

\usepackage{float}
\restylefloat{table}
\usepackage{amsmath}
\usepackage{amssymb}
\usepackage{graphicx}
\graphicspath{{img/}}
\usepackage{subfigure}
\usepackage{algorithm}
\usepackage{algpseudocode}
\usepackage{setspace}
\usepackage{hyperref}
\usepackage{color}

\begin{document}
\newcommand{\elimina}[1]{}
\newcommand{\piB}{{\tt B}}
\newcommand{\piO}{{\mbox{\sc opt}}}
\newcommand{\costB}{Cost_{\tt B}}
\newcommand{\costO}{Cost_{\mbox{\sc opt}}}
\newtheorem{definition}{Definition}
\newtheorem{theorem}[definition]{Theorem}
\newtheorem{lemma}[definition]{Lemma}
\newtheorem{corollary}[definition]{Corollary}
\newtheorem{example}[definition]{Example}

\date{}
\title{A Distributed Message-Optimal Assignment on Rings\footnote{A preliminary version of this paper has been
    presented at IPDPS06.} }
\author{Gianluca De Marco\thanks{Dipartimento di Informatica, Universit\`a di  Salerno, Italy. e-mail:
{\tt  demarco@dia.unisa.it}} \and  
Mauro Leoncini\thanks{Dipartimento di Scienze Fisiche, Informatiche e Matematiche, Universit\`a di 
Modena e Reggio Emilia, Italy, and Istituto di Informatica e Telematiche, CNR
Pisa, Italy. e-mail: {\tt leoncini@unimore.it}} \and Manuela
Montangero\thanks{Dipartimento di Scienze Fisiche, Informatiche e Matematiche, Universit\`a di 
Modena e Reggio Emilia, Italy, and Istituto di Informatica e Telematiche, CNR
Pisa, Italy. e-mail: {\tt manuela.montangero@unimore.it}}}
\maketitle

\begin{abstract}

  Consider a set of items and a set of $m$ colors, where each item is
  associated to one color. Consider also $n$ computational agents
  connected by a ring.  Each agent holds a subset of the items and
  items of the same color can be held by different agents.  We analyze
  the problem of distributively assigning colors to agents in such a
  way that (a) each color is assigned to one agent only and (b) the
  number of different colors assigned to each agent is minimum. Since
  any color assignment requires the items be distributed according to
  it ({\em e.g.} all items of the same color are to be held by only
  one agent), we define the cost of a color assignment as the amount
  of items that need to be moved, given an initial allocation.  We
  first show that any distributed algorithm for this problem requires
  a message complexity of $\Omega(n\cdot m)$ and then we exhibit an
  optimal message complexity algorithm for synchronous rings that in
  polynomial time determines a color assignment with cost at most
  three times the optimal. We also discuss solutions for the
  asynchronous setting. Finally, we show how to get a better cost
  solution at the expenses of either the message or the time
  complexity.  \vspace*{1cm}

\noindent
{\bf keywords:} algorithms; distributed computing; leader election; ring.
\vspace*{2cm}
\end{abstract}
\newpage
\section{Introduction}

We consider the following problem.  We are given a set of
computational agents connected by a (physical or logical)
ring\footnote{An importand example of logical architecture is given by
  the set of ring shaped nodes of a Distributed Hash Table.}, and a
set of items, each associated to one color from a given set.
Initially each agent holds a set of items and items with the same
color may be held by different agents ({\em e.g.} see
Fig~\ref{fig:problem}.(a)).  We wish the agents to agree on an
assignment of colors to agents in such a way that each color is
assigned to one agent only and that the maximum over all agents of the
number of different colors assigned to the same agent is minimum. We
call this a {\em balanced assignment}: Fig~\ref{fig:problem}.(b) and
Fig~\ref{fig:problem}.(c) show two possible balanced
assignments. Among all such assignments, we seek the one that
minimizes the total number of items that agents have to collect from
other agents in order to satisfy the constraints. For example, agent
$a_0$ in Fig~\ref{fig:problem}.(b) is assigned colors $\nabla$ and
$\spadesuit$, and therefore needs just to collect four items colored
$\nabla$, since no other agent has items colored $\spadesuit$.

\medskip

\begin{figure}[htbp]
\label{fig:problem}
\centerline{\epsfig{file=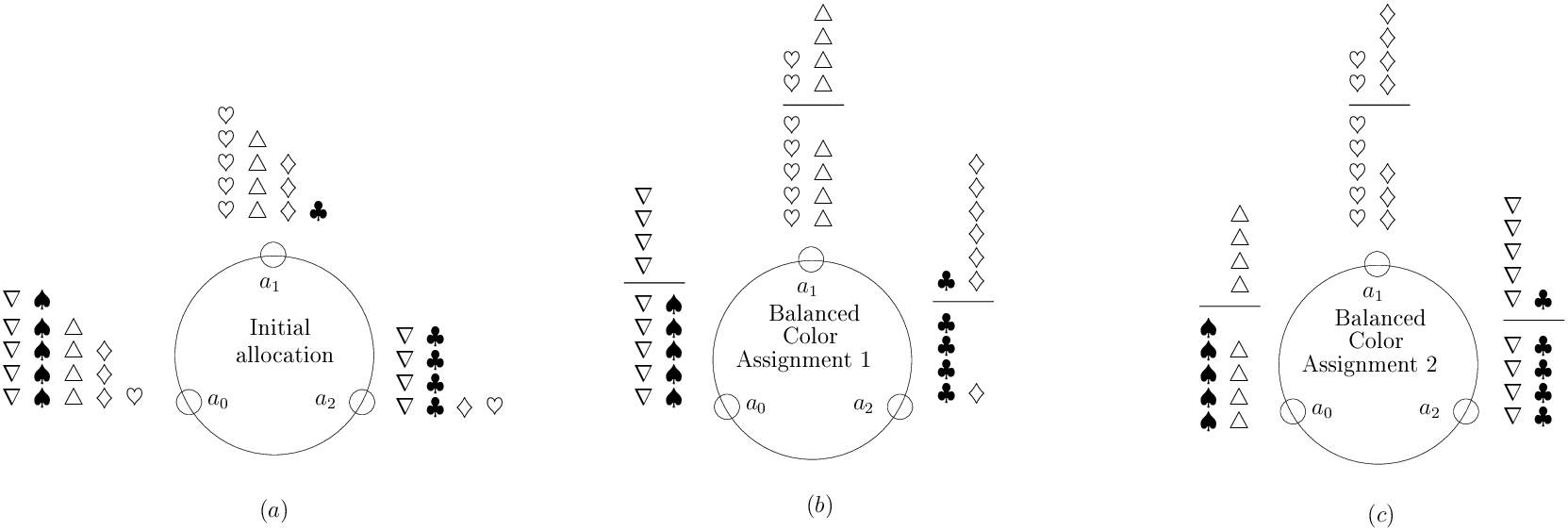, width=6in}}
\caption{\small Three agents: $a_0$, $a_1$, $a_2$, and
six colors: $\nabla, \diamondsuit,\heartsuit,\triangle,\spadesuit,\clubsuit$. $(a)$
is the initial allocation, while $(b)$ and $(c)$ are two possible balanced color
assignments. Items above the line are those that the agent collects from the
others. Therefore their total number is the cost of the assignment.
The assignment in $(b)$ costs $(4 \times\nabla )+(2\times\heartsuit +
4\times\triangle ) + (1\times\clubsuit + 6\times\diamondsuit) = 17 $ items,
while the assignment in $(c)$ costs
 $(4 \times\triangle )+(2\times\heartsuit +
4\times\diamondsuit ) + (5\times\nabla + 1\times\clubsuit) = 16 $ items.
}
\end{figure}

The problem can be formalized as follows.  Let ${\cal A}=\{a_0, \ldots
a_{n-1}\}$ be a set of $n$ agents connected by a ring and
let ${\cal C}=\{ c_0,\ldots ,c_{m-1} \}$ be a set of $m$ colors.  Let
$Q_{j,i}\geq 0$ be the number of items with color $c_j$ initially held by
agent $a_i$, for every $j=0,\ldots, m-1$, and for every $i= 0,\ldots,
n-1$. 

\begin{definition}[Balanced Coloring]
\label{def:BC}

A Balanced Coloring is an assignment $\pi:\{ 0,\ldots,m-1\} \rightarrow 
\{ 0,\ldots, n-1\}$ of the $m$ colors to the $n$ agents in such a way that:

\begin{itemize}

\item for every color $c_j$, there is at least one
agent $a_i$ such that $\pi (j) = i$;

\item for every agent $a_i$, 
$\lfloor \frac{m}{n} \rfloor \leq  |\{c_j \mid  \pi (j) = i  \}| \leq \lceil
\frac{m}{n} \rceil$; {\em i.e.}, the number of color assigned to
agents has to be balanced. In particular, $\lfloor \frac m n
\rfloor$ colors are assigned to $[ \left( \lfloor \frac m n \rfloor
+ 1 \right) n - m ] $
agents, and $\lfloor \frac m n \rfloor + 1$ colors
to the remaining $\left( m - \lfloor \frac m n \rfloor
  n \right) $
agents. 
\end{itemize}

\end{definition}

Any Balanced Coloring then assigns almost the same number of
colors to each agent, and when $m$ is a multiple of $n$, then each agent
is assigned exactly the same number of colors.

\begin{definition}[Distributed Balanced Color Assignment Problem]
\label{def:cost}
The Distributed Balanced Color Assignment Problem aims at distributively
finding a Balanced Coloring of minimum cost, where   
the {\em cost} of a Balanced Coloring $\pi:\{ 0,\ldots,m-1\} \rightarrow 
\{ 0,\ldots, n-1\}$ is defined as

\begin{equation} \label{cost}
Cost(\pi) = \sum_{j=0}^{m-1} \sum_{i=0, \atop i\not= \pi(j)}^{n-1}
Q_{j,i} .
\end{equation}

\end{definition}

The cost of the
optimal assignment will be denoted by $\costO$. The {\em approximation ratio}
of a sub-optimal algorithm {\tt A} is the quantity
$\frac{Cost_{\mbox{\tt A}}}{\costO}$, where  
$Cost_{\mbox{\tt A}}$  is the cost of the solution computed by {\tt A}.

\paragraph{Motivations.} The scenario defined above may arise in many practical situations in
which a set of agents independently search a common space 
(distributed crawlers, sensor networks, peer-to-peer agents, etc)  and then have to
reorganize the retrieved data (items) according to a given
classification (colors), see for example \cite{KCRMC, PKKMH, SMKKB}.  In these cases, determining a
distributed balanced color assignment may guarantee specialization by
category with maximal use of data stored in local memory or
balanced computational load of agents minimizing the communication
among agents. 

A similar scenario may also
arise in computational economics~\cite{CCM}.  The distributed balanced color
assignment formalizes a combinatorial auction problem where agents are
the bidders and colors represent auction objects.  The number of items
that an agent holds for each color can be interpreted as a measure of
{\em desire} for certain objects (colors).  Balancing the number of
colors per agent and minimizing the cost guarantees the maximum
bidders satisfaction.

\paragraph{The model.} We assume that the agents in ${\cal A}=\{a_0,
\ldots, a_{n-1}\}$ are connected by a ring: agent $a_i$ can
communicate only with its two neighbors $a_{(i+1) \mod \, n}$
(clockwise) and $a_{(i-1) \mod \, n}$ (anti-clockwise). We assume that
each agent knows $n$ (the number of agents), and ${\cal C}$ (the set
of colors).  Each agent $a_i$ is able to compute $p_i = \max_{0\leq
  j\leq m-1} Q_{j,i}$ independently, {\em i.e.}, the maximum number of
items it stores having the same color, while $p = max_{0\leq i \leq
  n-1} p_i$ is unknown to the agents.

We will consider both synchronous and asynchronous rings,
always specifying which  case we are working with or if results hold
for both models. 

For synchronous and asynchronous rings, we measure message complexity
in the standard way (cf. \cite{NL,P}), {\em i.e.}, we assume that
messages of bit length at most $c\log n$, for some constant $c$
(called {\em basic messages}), can be transmitted at unit cost.  One
message can carry at most a constant number of agent ID's. Non basic
messages of length $L$ are allowed, and we charge a cost $c'
\left\lceil L/\log n \right\rceil$ for their transmission, for some
constant $c'$.

For what concerns time complexity, in the synchronous case we assume
that agents have access to a global clock and that the distributed
computation proceeds in rounds.  In each round any agent can check if
it has received a message (sent in the previous round), make some
local computation, and finally send a message.  In the asynchronous
case agents don't have access to a global clock, but the distributed
computation is {\it event driven} (``upon receiving message $\alpha$,
take action $\beta$'').  A message sent from one agent to another will
arrive in a finite but {\it unbounded} amount of time.

Throughout the paper we will use the generic term {\it time unit} to
designate the time needed for a message to traverse a link both in the
synchronous and asynchronous case: for the synchronous case a time
unit (also called round or time slot) is the time elapsed between two
consecutive ticks of the clock; for the asynchronous setting a time
unit can be any bounded finite amount of time.  Nevertheless, in both
cases the time complexity can be simply measured as the number of time
units needed to complete the algorithm's execution.

\paragraph{Outline of the results.} The goal of this paper is to
analyze the efficiency with which we can solve the Distributed
Balanced Color Assignment problem. In Section~\ref{pre} we discuss
some related problems and show the equivalence with the so called
weighted $\beta$-assignment problem in a centralized
setting~\cite{CL}. We also show that a brute force approach that first
gathers all information at one agent, then computes the solution
locally and finally broadcasts it, has a high message complexity of
$O(m n^2 \log p / \log n)$.  Fortunately, we can do better than this.
In Section~\ref{LowerBound} we give an $\Omega(mn)$ lower bound on the
message complexity to determine a feasible solution (suitable for both
synchronous and asynchronous cases).  In Section~\ref{Alg} we present
an algorithm that finds a feasible solution to the problem
whose message complexity is $O(mn\log m / \log n)$, which is then optimal
when $m$ is bounded by a polynomial in $n$. 

Interestingly enough, message complexity is never affected by the
value $p$, while running time is.  
We then show how to adapt the
algorithm to work also in the asynchronous case at the expenses of a
slight increase in message complexity; this time the messege cost
depends also on $p$, but the asymptotic bound is affected only when 
$p$ is very large (i.e., only if $p \not\in O(m^m)$). In Section~\ref{Approx} we
show that the proposed algorithm (both synchronous and asynchronous
versions) computes a { Balanced Coloring} whose cost is only a factor of
three off the optimal one, and we also show that the analysis of the
approximation is tight. Finally, we show that we can find { Balanced Colorings} 
with a better approximation ratio at the expenses of the
message and/or time complexity.

A preliminary version of this work appeared in \cite{DLM}. 
In the previous version it was assumed that parameter $p$ (the maximum
number of items of a given color) was known to the computational agents. 
Since in practical situations it is difficult to have a good estimate
of such a global parameter, in this new version we removed this assumption.  
This required both new algorithmic ideas and technical efforts. 
The algorithm for the asynchronous communication model was also not contained in the preliminary version. 
Finally, we enriched the proof of the lower bound with new 
insights that could be useful for further generalizations to different network topologies.

\section{Related problems and centralized version}
\label{pre}

In this section we relate the Distributed Balanced Color Assignment
problem to known matching problems that have been well studied in
centralized settings. We will first show that when $m=n$ our problem
is equivalent to a maximum weight perfect matching problem on complete
bipartite graphs. On the other hand, when $m\geq n$, our problem
reduces to the weighted $\beta$-assignment problem.

The class of $\beta$-assignment problems has been introduced by Chang
and Lee \cite{CL}, in the context of the problems of assigning jobs to
workers, in order to incorporate the problem of balancing the work
load that is given to each worker. In the weighted $\beta$-assignment
problem one aims at minimizing the maximum number of jobs assigned to
each worker.

The interested reader can find useful references on these problems,
their complexity, and related approximation issues in
\cite{AAB,CH,KR,SM,SR}.

\medskip

We associate to agents and colors the complete bipartite graph on $n
+ m$ vertices, which we denote by $G=({\cal C}, {\cal A}, {\cal
C}\times{\cal A})$. We add weights to $G$ as follows: the weight of
the edge joining agent $a_i$ and color $c_j$ is $Q_{j,i}$.

\paragraph{ Case $m = n$.}
Given a graph $(V,E)$, a perfect matching is a subset $M$ of edges in
$E$ such that no two edges in $M$ share a common vertex and each
vertex of $V$ is incident to some edge in $M$. When edges of the graph
have an associated weight, then a maximum weight perfect matching is a
perfect matching such that the sum of the weights of the edges in the
matching is maximum.

\begin{lemma}  \label{maxmatching}
{ When $m=n$}, a maximum weight perfect matching on $G$ is a minimum cost
solution to the balanced color assignment problem.
\end{lemma}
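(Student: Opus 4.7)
The plan is to reduce the balanced color assignment problem, when $m=n$, to a straightforward algebraic identity linking the cost of $\pi$ to the weight of the matching induced by $\pi$. First I would observe that when $m=n$ the definition of Balanced Coloring forces $\lfloor m/n \rfloor = \lceil m/n \rceil = 1$, so every agent is assigned exactly one color and every color is assigned to exactly one agent. Hence $\pi:\{0,\ldots,m-1\}\to\{0,\ldots,n-1\}$ is a bijection, and the edge set $M_\pi = \{(c_j, a_{\pi(j)}) : j = 0,\ldots,m-1\}$ is a perfect matching of $G$; conversely every perfect matching of $G$ arises in this way from some bijection, which in turn is a valid Balanced Coloring.

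Next I would rewrite the cost. Letting $T = \sum_{j=0}^{m-1}\sum_{i=0}^{n-1} Q_{j,i}$ be the (fixed) total number of items in the system, the definition of cost gives
\begin{equation*}
Cost(\pi) \;=\; \sum_{j=0}^{m-1} \sum_{\substack{i=0 \\ i\neq \pi(j)}}^{n-1} Q_{j,i} \;=\; T \;-\; \sum_{j=0}^{m-1} Q_{j,\pi(j)}.
\end{equation*}
By the weighting of $G$, the quantity $\sum_{j} Q_{j,\pi(j)}$ is precisely the weight $w(M_\pi)$ of the matching associated with $\pi$.

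From the identity $Cost(\pi) = T - w(M_\pi)$ and the fact that $T$ does not depend on $\pi$, minimizing $Cost(\pi)$ over all Balanced Colorings is the same as maximizing $w(M_\pi)$ over all perfect matchings of $G$. Combined with the bijective correspondence between Balanced Colorings and perfect matchings established in the first step, this shows that any maximum weight perfect matching of $G$ yields a minimum cost Balanced Coloring, and conversely.

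The argument is essentially an algebraic rearrangement, so there is no serious obstacle; the only point that deserves a careful sentence is the very first one, namely checking that Definition~\ref{def:BC} in the case $m=n$ really forces $\pi$ to be a bijection (so that the feasible set for our problem matches the set of perfect matchings of $G$, rather than some proper subset or superset of it).
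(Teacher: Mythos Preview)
Your proposal is correct and follows essentially the same approach as the paper: both establish the bijective correspondence between Balanced Colorings and perfect matchings (when $m=n$), and both observe that the cost of an assignment equals the total weight minus the weight of the chosen matching (the paper writes this as $\sum_{e\in\mathcal{E}\setminus E} w(e)$), so minimizing cost is equivalent to maximizing matching weight. Your write-up is in fact a bit more careful about spelling out why Definition~\ref{def:BC} forces $\pi$ to be a bijection, but the underlying argument is the same.
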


\begin{proof}
  Given a perfect matching $E\subseteq {\cal E}={\cal C}\times{\cal
    A}$ on $G$, for every $(c_j,a_i)\in E$ we assign color $c_j$ to
  agent $a_i$. As $G$ is complete and $E$ is a perfect matching on
  $G$, every color is assigned to one and only one agent and
  vice-versa. Moreover, the cost of any color assignment $E$ can be
  written as $\sum_{e\in {\cal E}\setminus E} w(e)$, and this
  expression achieves its minimum when $E$ is a maximum weight perfect
  matching.
\end{proof}

\medskip

Finding matchings in graphs is one of the most deeply investigated
problems in Computer Science and Operations Research (see
\cite{Lovasz} for a comprehensive description of the different
variants, theoretical properties, and corresponding algorithms). The
best algorithm known to find a perfect matching in a bipartite graph
is due to Hopcroft and Karp \cite{HK}, and runs in
$O\left(|E|\sqrt{|V|}\right)$ time, where $V$ and $E$ denote the
vertex and edge sets,
respectively. 
The best known algorithm for finding a maximum weight
perfect matching is the {\em Hungarian method}, due to Kuhn \cite{K},
which runs in time $O(n^3)$.

\paragraph{Case $m\geq n$.}
The $\beta$-assignment problem is defined on a bipartite graph
$G=(S,T,E)$ where $(S,T)$ is the bipartition of the vertex set. A
$\beta$-assignment of $S$ in $G$ is a subset of the edges $X
\subseteq E$ such that, in the induced subgraph $G'=(S,T,X)$, the degree
of every vertex in $S$ is exactly one. Let $\beta(X)$ be the
maximum degree, in $G'$, of vertices in $T$ and let $\beta(G)$ be
the minimum value of $\beta(X)$ among all possible
$\beta$-assignments $X$. The weighted $\beta$-assignment problem
consists of finding a $\beta$-assignment $X$ with $\beta(X) =
\beta(G)$ which maximizes the total weight of the edges in $X$.
The following lemma is straightforward.

\begin{lemma}
The balanced color assignment problem is a weighted
$\beta$-assignment of ${\cal C}$ in the complete bipartite graph
$G=({\cal C}, {\cal A}, {\cal C} \times {\cal A})$, with $\beta(G)
= \lceil m/n \rceil$.
\end{lemma}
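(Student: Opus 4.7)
The plan is to establish two pieces independently: (i) a faithful correspondence between instances of the balanced color assignment problem and weighted $\beta$-assignments on $G$, and (ii) the value $\beta(G) = \lceil m/n \rceil$. Since the lemma is labelled ``straightforward,'' I would keep both pieces short and structural rather than algorithmic.

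For part (i), I would start by noting that a $\beta$-assignment $X$ of $\mathcal{C}$ in $G$ is, by definition, a subset of edges in which every color vertex has degree exactly one. Because $G$ is complete bipartite, such subsets $X$ are in bijection with functions $\pi:\{0,\dots,m-1\}\to\{0,\dots,n-1\}$ via $(c_j,a_i)\in X\iff \pi(j)=i$. Under this bijection the degree of agent $a_i$ in $X$ equals $|\{j:\pi(j)=i\}|$, so $\beta(X)=\max_i |\{j:\pi(j)=i\}|$ is exactly the maximum number of colors assigned to any single agent. With the edge weights $w(c_j,a_i)=Q_{j,i}$, the total weight of $X$ equals $\sum_j Q_{j,\pi(j)}$; since $\sum_{j,i} Q_{j,i}$ is a constant depending only on the instance, maximizing this weight is equivalent to minimizing $Cost(\pi)=\sum_j \sum_{i\neq \pi(j)} Q_{j,i}$, matching Definition~\ref{def:cost}.

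For part (ii), I would bound $\beta(G)$ from both sides. The lower bound is a pigeonhole argument: the degrees of agents in any $\beta$-assignment sum to $m$, so some agent must have degree at least $\lceil m/n \rceil$, giving $\beta(G)\geq\lceil m/n\rceil$. For the matching upper bound, completeness of $G$ lets me construct an explicit $\beta$-assignment achieving this value: distribute the $m$ colors so that $m-\lfloor m/n\rfloor n$ agents receive $\lfloor m/n\rfloor+1$ colors and the rest receive $\lfloor m/n\rfloor$ colors (exactly the distribution prescribed by Definition~\ref{def:BC}). Every color is matched, every agent has degree in $\{\lfloor m/n\rfloor,\lceil m/n\rceil\}$, and the maximum degree is exactly $\lceil m/n\rceil$, hence $\beta(G)\leq\lceil m/n\rceil$.

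The only subtlety, and the one I would be most careful about, is that Definition~\ref{def:BC} imposes both an upper and a lower bound on the per-agent color count, while the $\beta$-assignment framework only enforces the upper bound $\beta(X)=\beta(G)=\lceil m/n\rceil$. I would address this by remarking that the canonical balanced distribution constructed for the upper bound in part (ii) certifies feasibility of Definition~\ref{def:BC}, and that in the subsequent algorithmic sections it is this more restrictive balanced shape that is enforced; the $\beta$-assignment lens is used only as a framing device connecting the distributed problem to the centralized literature referenced in \cite{CL,AAB,CH,KR,SM,SR}. With this remark in place the statement follows directly from (i) and (ii).
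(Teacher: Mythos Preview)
Your argument is correct, and there is nothing to compare it against: the paper gives no proof of this lemma, merely declaring it ``straightforward.'' Your two-part structure (the bijection between $\beta$-assignments of $\mathcal{C}$ and functions $\pi$, together with the pigeonhole lower bound and explicit-construction upper bound for $\beta(G)$) is exactly the natural way to spell this out, and your observation about the mismatch between the two-sided constraint in Definition~\ref{def:BC} and the one-sided constraint $\beta(X)=\beta(G)$ is a genuine subtlety that the paper simply glosses over.
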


The fastest known algorithm to solve the weighted $\beta$-assignment
problem is due to Chang and Ho \cite{CH} and runs in $O(\max\lbrace
|S|^2|T|,|S||T|^2\rbrace)$ time, which in our case gives the bound
$O(m^2n)$.

While the maximum weighted perfect matching problem (and its variants) has
been widely investigated in the distributed setting (see
\cite{HKP1,HKP2}), no distributed results are known for the weighted
$\beta$-assignment problem.

\paragraph{A brute force approach.}
A brute force distributed solution to the problem can be obtained by
asking all the agents to send their color information to one specific
agent (a priori chosen or elected as the leader of the ring); such an
agent will then solve the problem locally and send the solution
back to all the other agents.  The factor dominating the message
complexity of the algorithm above is the information collecting stage.
Indeed, each agent sends $O(m)$ non-basic messages, each corresponding
to $O(\log p / \log n)$ basic messages, through $O(n)$ links, on the
average. This results in a message complexity of $O(m n^2 \log p /
\log n)$.  On the other hand, we might think of an algorithm in which
each agent selects the correct number of colors basing its choice just
on local information ({\em e.g.} its label). This requires no
communication at all, but, even if we are able to prove that the
agents agree correctly on a balanced coloring, we have no guarantee on
how good the solution is.  As we already said, we show that we can do
better than this.



\elimina{Can we do better?
In the next section we show a $\Omega(mn)$ lower bound on the message
complexity of the problem, and in Section~\ref{Alg} we describe, for
the synchronous ring, a $O(n\log p)$ time distributed algorithm with
cost at most three times the optimal showing optimal message
complexity under the reasonable hypothesis that $m \in O(n^c)$ for some constant $c$. 
We than show how to adapt the algorithm to the asynchronous case,
where an extra  }

\section{Lower bound on message complexity}
\label{LowerBound}

\elimina{In this section we prove a lower bound to the problem that
  applies to rings.  In particular, we will prove that agents need to
  exchange $\Omega (mD)$ messages to solve the problem, where $D$ is
  the diameter of the graph.

\medskip
}

In this section we prove a lower bound on the message complexity of
the problem that applies to both synchronous and asynchronous rings.
\elimina{ Before formally proving the lower bound we make one
  observations to clarify the nature of the problem.

\begin{obs}
Any solution in which the agents select their colors
according to the initial distribution of
items requires some communication among agents.
\end{obs}

Assume agents can agree on a balanced coloring without
communicating, {\em i.e.}, each agent is able to select
the colors that are assigned to it autonomously without
making errors. In such a situation, the choice
of agent $a_i$ must be independent of the particular values
of the $Q_{j,i}$'s, for otherwise, the agent cannot be
sure that no other agent chooses the same colors (suppose
the two have exactly the same initial distribution
of items). On the other hand, an assignment that is
done without taking the initial distribution of items
to agents into consideration cannot give any guarantee
on the cost. Consider, for example, the case of an
initial distribution such that each agent has items of
all colors except the ones that it selects. This leads
to a cost that is equal to the total amount of items.
Any assignment in which at least one agent is assigned
one of the colors it initially holds has a smaller cost.
We conclude that any solution in which the agents' selection
is done according to the initial distribution of
items requires some communication among agents. 

\medskip
} 

We prove the lower bound on a particular subset ${\cal I}$ of the
instances of the problem. Let $n$ be even and let $m=(nt)/2$, for some
integer $t$. Since we are only interested in asymptotic bounds, for the sake of simplicity, 
we will also assume that $m$ is a multiple of $n$, i.e. $t/2 = m/n$ is an integer.

For any agent $a_i$, let $a_{i'}$ denote the agent at
maximum distance from $a_i$ on the ring. In the following we say that
{\em a color is assigned to the pair} $(a_i,a_{i'})$, for $i=0,\ldots ,
n/2-1$, to mean that it is assigned to both agents of the pair.  We
also say that a set $\mathcal{C}$ of colors is assigned to agent $a$
iff all the colors in $\mathcal{C}$ are assigned to $a$.

Let $\{ {\cal C}_0, \ldots , {\cal C}_{n/2-1}\}$ be a partition of the
set of colors such that $|{\cal C}_j| = t$ for all $j= 0, \ldots,
n/2-1$.  Set ${\cal I}$ consists of all instances of the Distributed
Balanced Color Assignment Problem such that for any $i=0,\ldots ,
n/2-1$, the following two conditions hold:

(a) for any color $j \in {\cal C}_i = \{ i_1,\ldots , i_t\}$ 
both agents of pair $(a_i,a_{i'})$ hold at least one item of color
$j$, {\em i.e.} $Q_{j,i}>0, Q_{j,i'} > 0$;

(b) neither $a_i$ nor $a_{i'}$ hold colors not in $\mathcal{C}_i$.

\begin{lemma}\label{le:pair}
  Given an instance in ${\cal I}$, any optimal solution assigns to
  $(a_i,a_{i'})$ {\it only} colors from set ${\cal C}_i$, for
  $i=0,\ldots , n/2-1$.
\end{lemma}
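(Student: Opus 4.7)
The plan is an exchange argument by contradiction. Suppose $\pi$ is an optimal Balanced Coloring but, for some $i\in\{0,\ldots,n/2-1\}$, there is a color $c\in{\cal C}_i$ with $\pi(c)=k$ where $k\notin\{i,i'\}$. I would first exploit the assumption that $m$ is a multiple of $n$: every balanced $\pi$ assigns exactly $m/n=t/2$ colors to each agent, so the pair $(a_i,a_{i'})$ receives exactly $t=|{\cal C}_i|$ colors in total. Since at least one color of ${\cal C}_i$ sits outside the pair, by the pigeonhole principle the pair must also receive some color $c'\notin{\cal C}_i$. Up to symmetry, assume $\pi(c')=i$ (the case $\pi(c')=i'$ being analogous); let $j\ne i$ be the index with $c'\in{\cal C}_j$.

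The core of the argument is to build a new assignment $\pi'$ from $\pi$ by swapping only the images of $c$ and $c'$, namely $\pi'(c)=i$ and $\pi'(c')=k$, leaving all other assignments unchanged. The swap preserves balance because it exchanges exactly one color between $a_i$ and $a_k$, so every agent's color count is unaffected; it also clearly preserves the surjectivity of $\pi$. It then suffices to show $Cost(\pi')<Cost(\pi)$ to contradict optimality.

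Only the contributions of $c$ and $c'$ to the cost change. By condition (b) of ${\cal I}$, the only non-zero entries of $Q_{c,\cdot}$ occur at $a_i$ and $a_{i'}$, so the contribution of $c$ to $Cost(\pi)$ is $Q_{c,i}+Q_{c,i'}$, while under $\pi'$ it becomes $Q_{c,i'}$; the net change for $c$ is $-Q_{c,i}$, strictly negative by condition (a). For $c'\in{\cal C}_j$, whose non-zero $Q$-entries lie only at $a_j$ and $a_{j'}$ (and $a_i$ is neither, since $j\neq i$), the contribution equals $Q_{c',j}+Q_{c',j'}$ under $\pi$ and either stays the same (if $a_k\notin\{a_j,a_{j'}\}$) or decreases by $Q_{c',k}\ge 0$ (if $a_k\in\{a_j,a_{j'}\}$). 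Summing the two, $Cost(\pi')<Cost(\pi)$, the desired contradiction.

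The only delicate step is the pigeonhole argument, which hinges on $m$ being a multiple of $n$ (so that every agent gets exactly $t/2$ colors in any balanced coloring) and on $|{\cal C}_i|=t$ matching the pair's capacity. Once the two colors $c$ and $c'$ are identified, the cost bookkeeping is essentially automatic because conditions (a) and (b) of ${\cal I}$ confine every color to a single antipodal pair: repatriating $c$ to its home pair always saves a strictly positive amount, while displacing $c'$ to $a_k$ costs nothing extra.
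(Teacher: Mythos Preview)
Your proof is correct. Both you and the paper argue by an exchange contradiction, but the mechanics differ. The paper traces an entire \emph{cycle} of misplacements: starting from a foreign color $h_0\in{\cal C}_{k_0}$ assigned to $a_{k_1}$, it follows the chain $h_w\in{\cal C}_{k_w}$ assigned to $a_{k_{w+1}}$ until the indices close up at $k_{l+1}=k_0$, then simultaneously repatriates every $h_w$ to $a_{k_w}$ and computes the total saving $\sum_w Q_{h_w,k_w}>0$. Your argument is more economical: the pigeonhole step (each pair receives exactly $t=|{\cal C}_i|$ colors under balance, so one exported home color forces one imported foreign color) lets you isolate a single pair $(c,c')$ and swap just those two. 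The key observation that makes the single swap sufficient is that moving $c'$ from $a_i$ to $a_k$ can never \emph{increase} cost, because $a_i\notin\{a_j,a_{j'}\}$ so $c'$ already contributes its full weight $Q_{c',j}+Q_{c',j'}$ under $\pi$; hence the strict saving $Q_{c,i}>0$ from repatriating $c$ is never offset. The paper's cycle version avoids this case analysis on whether $k\in\{j,j'\}$ at the price of heavier bookkeeping. One small presentational point: the literal negation of the lemma is ``some color outside ${\cal C}_i$ is assigned to the pair,'' whereas you begin from the dual statement ``some color of ${\cal C}_i$ is assigned outside the pair''; you do justify the equivalence via the balance count, but it would read more cleanly to start from $c'$ and then deduce $c$.
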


\begin{proof}
  Consider any solution to an instance from set ${\cal I}$ that
  assigns to the agent $a_i$ a color $h_0$ initially held by some pair
  $(a_k,a_{k'})$, with $k\neq i$. Since any optimal solution is
  perfectly balanced on input instances of $\mathcal{I}$, there must
  be at least one color $h_1$ initially stored in $(a_i,a_{i'})$ that
  is assigned to some other agent, say $a_p$. The same argument can in
  turn be applied to $a_p$ and so on until (since the number of
  colors/agents is finite) we fall back on $a_k$.  Formally, there
  exists $0 \leq k \leq n/2-1$, $k \not = i$, such that $h_0 \in {\cal
    C}_k \not = {\cal C}_i$, and a sequence of indices $k_0,
  k_1,\ldots, k_l$, with $k_0 = k$, $k_1 = i$ and $k_{l+1} = k$, such
  that
\begin{itemize}
  \item color $h_0\in {\cal C}_{k_0} (= {\cal C}_k)$ is assigned to agent $a_{k_1} (= a_i)$;
  \item color $h_1\in {\cal C}_{k_1} (= {\cal C}_i)$ is assigned to agent $a_{k_2}$;\\
  \vdots
  \item color $h_l\in {\cal C}_{k_{l}}$  is assigned to agent $a_{k_{l+1}} (= a_k)$. 
\end{itemize}

Let $Cost_1$ denote the cost of such a solution and let $\Gamma$ be
the contribution to the cost given by colors different from
$h_0,h_1,\ldots, h_l$. Then, recalling condition (b) of the definition
of ${\mathcal I}$, we have
\begin{eqnarray*}
  Cost_1 &=& \Gamma + \sum_{w = 0, \atop w\not= k_1 }^{n-1} Q_{h_0 ,w}
  +\sum_{w = 0, \atop w\not= k_2 }^{n-1} Q_{h_1 ,w} + \cdots+\sum_{w=0, \atop w \not= k_{l+1}}^{n-1} Q_{h_l ,w}\\
         &=& \Gamma + (Q_{h_0,k_0}+Q_{h_0,k_0'}) + \cdots + (Q_{h_l,k_l}+Q_{h_l,k_l'}).
\end{eqnarray*}

Consider now a solution that differs from the previous one only by the
fact that every color in ${\cal C}_{w}$ is assigned to agent $a_w$ for $w = k_0,k_1 \ldots, k_l$.
Namely, 
\begin{itemize}
  \item $h_0\in {\cal C}_{k_0}$ is assigned to $a_{k_0}$;
  \item $h_1\in {\cal C}_{k_1}$ is assigned to $a_{k_1}$;\\
  \vdots
  \item $h_l\in {\cal C}_{k_l}$ is assigned to $a_{k_l}$. 
\end{itemize}

This is clearly a perfectly balanced solution, since each agent ``loses'' and ``gains''
exactly one color with respect to the previous case. Letting 
$Cost_2$ be the cost of such a solution, we have
\begin{eqnarray*}
  Cost_2    &=& \Gamma + \sum_{w = 0, \atop w\not= k_0 }^{n-1} Q_{h_0 ,w} +\cdots+ \sum_{w=0, \atop w \not= k_l}^{n-1} Q_{h_l ,w}\\
            &=& \Gamma + Q_{h_0,k_0'} + \cdots + Q_{h_l,k_l'}. 
\end{eqnarray*}

Hence,

$$Cost_1 - Cost_2 = Q_{h_0,k_0} + \cdots + Q_{h_l,k_l} > 0,$$ 
where the inequality follows from condition (1) of the definition of ${\cal I}$.
\end{proof}

We now consider two specific instances in ${\cal I}$ that will be used
in the following proofs.

 { For each pair $(a_i, a_{i'})$, for $i=0,\ldots, n/2-1$,} and its initially allocated set of colors
${\cal C}_i=\{i_1,\ldots,i_t\}$, fix any $u > 1$ and  
partition set ${\cal C}_i$ into subsets ${\cal C}'$ and ${\cal C}''$, each
of cardinality $t/2$.  
We define instance ${\cal I}_1 \in {\cal I}$ for the pair
$(a_i,a_{i'})$ in the following way: 

$$
\begin{array}{lll}
{\cal I}_1: & Q_{j,i}  = u & \mbox{ for each } j \in {\cal C}_i \\
  & Q_{j,i'} = Q_{j,i} = u & \mbox{ for each } j \in  {\cal C}' \\
  & Q_{j,i'} = Q_{j,i}+1 = u + 1 & \mbox{ for each } j \in  {\cal C}''
\end{array}
$$

Hence, by construction, instance ${\cal I}_1$ has the property that for any $j\in  {\cal C}''$, 
$Q_{j,i'} > Q_{j,i}$. 

\begin{example}\label{ex:uno}
Consider a pair $(a_i, a_{i'})$ with a set of colors ${\cal C}_i=\{1,2,3,4,5,6,7,8\}$. 
Let $u = 2$.
If ${\cal C}'= \{ 2,4,5,8\}$ and ${\cal C}''= \{1,3,6,7\}$,
then instance ${\cal I}_1$ will be as follows:
\begin{table}[H]
\centering
\begin{tabular}{|l|l|l|l|l|l|l|l|l|}
\hline
colors                       & 1 & 2 & 3 & 4 & 5 & 6 & 7 & 8   \\ \hline
\# items for $a_i$           & {\bf 2} & 2 & {\bf 2} & 2 & 2 & {\bf 2} & {\bf 2} & 2   \\ \hline
\# items for $a_{i'}$         & 3 & {\bf 2} & 3 & {\bf 2} & {\bf 2} & 3 & 3 & {\bf 2} \\ \hline
\end{tabular}
\end{table}
\end{example}

In the following lemma we will show that the only optimal solution to
${\cal I}_1$ is the one that assigns ${\cal C}'$ to $a_i$ and ${\cal
  C}''$ to $a_{i'}$.  The above example gives an intuition of the
formal proof. By Lemma~\ref{le:pair}, we know that only the items
{ that need to be} exchanged between $a_i$ and $a_{i'}$ account for the cost of the
optimal solution, and the latter is achieved by moving items with
weight 2 (those highlighted in bold in the table), i.e., by assigning
${\cal C}'$ to $a_i$ and ${\cal C}''$ to $a_{i'}$, { for a
  total cost of 16}.

\begin{lemma}\label{l:optuno}
  The only optimal solution to instance ${\cal I}_1$ is the one that
  assigns ${\cal C}'$ to $a_i$ and ${\cal C}''$ to $a_{i'}$.
\end{lemma}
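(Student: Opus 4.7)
The plan is to combine Lemma~\ref{le:pair} with a short local counting argument on the pair $(a_i, a_{i'})$. Since the instance ${\cal I}_1$ lies in ${\cal I}$ and we are in the balanced regime with $m/n = t/2$, Lemma~\ref{le:pair} already forces every optimal solution to assign exactly the colors of ${\cal C}_i$ to the pair $(a_i, a_{i'})$, split into two disjoint subsets $S_i$ and $S_{i'}$ of size $t/2$ each. So the optimization question reduces to: among all partitions ${\cal C}_i = S_i \sqcup S_{i'}$ with $|S_i| = |S_{i'}| = t/2$, which one minimizes the contribution of this pair to the cost?

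Next I would write the pair's contribution explicitly. By condition (b) in the definition of ${\cal I}$, no agent outside the pair holds any color of ${\cal C}_i$, so the cost contributed by color $j \in S_i$ is simply $Q_{j,i'}$ and the cost contributed by $j \in S_{i'}$ is $Q_{j,i}$. Hence the pair's contribution is
\begin{equation*}
  f(S_i,S_{i'}) \;=\; \sum_{j \in S_i} Q_{j,i'} \;+\; \sum_{j \in S_{i'}} Q_{j,i}.
\end{equation*}
Now I would plug in the values of ${\cal I}_1$: on ${\cal C}'$ both $Q_{j,i}$ and $Q_{j,i'}$ equal $u$, while on ${\cal C}''$ we have $Q_{j,i} = u$ and $Q_{j,i'} = u+1$. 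Setting $x = |S_i \cap {\cal C}''|$, a direct computation gives $f(S_i,S_{i'}) = u t + x$, independent of how ${\cal C}'$ is split (because on ${\cal C}'$ the two weights are equal).

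From $f(S_i,S_{i'}) = ut + x$ the conclusion is immediate: the cost is minimized precisely when $x = 0$, i.e.\ when $S_i \cap {\cal C}'' = \emptyset$. Combined with $|S_i| = t/2 = |{\cal C}'|$, this forces $S_i = {\cal C}'$ and therefore $S_{i'} = {\cal C}''$, and the minimum is unique.

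I do not expect any real obstacle here; the only subtle point is to make sure one is allowed to restrict attention to the pair $(a_i, a_{i'})$ in isolation, which is exactly what Lemma~\ref{le:pair} provides, together with the observation that under the perfect-balance constraint the $t$ colors of ${\cal C}_i$ must split evenly. Once this is in place, the proof is just the one-line calculation $f = ut + x$ above.
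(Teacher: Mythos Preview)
Your proof is correct and follows essentially the same approach as the paper: reduce via Lemma~\ref{le:pair} to partitions of ${\cal C}_i$ between $a_i$ and $a_{i'}$, then compare their costs using that $Q_{j,i'} > Q_{j,i}$ exactly on ${\cal C}''$. Your parametrization $f(S_i,S_{i'}) = ut + |S_i \cap {\cal C}''|$ is a slightly cleaner packaging of the paper's chain of equalities, but the underlying argument is the same.
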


\begin{proof}
We first compute the cost of this solution: 

$$ Cost = \sum_{j\in {\cal C}'} Q_{j,i'} +  \sum_{j\in
{\cal C}''} Q_{j,i}  = \sum_{j\in {\cal C}'} Q_{j,i} + \sum_{j\in
{\cal C}''} Q_{j,i} = \sum_{j \in {\cal C}_i}Q_{j,i} . $$

Consider any other partition of ${\cal C}_i$ into two sets
$\overline{{\cal C}'}$ and $\overline{{\cal C}''}$. Consider another
solution that assigns $\overline{{\cal C}'}$ to $a_i$ and
$\overline{{\cal C}''}$ to $a_{i'}$ and let us compute the cost of
this new solution:

\begin{eqnarray*}
  \overline{Cost} &=&
  \sum_{j\in  \overline{{\cal C}'} } Q_{j,i'}
  + \sum_{j\in  \overline{{\cal C}''} } Q_{j,i} \\
  &=&\sum_{j\in  \overline{{\cal C}'}\cap {\cal C}'}
  Q_{j,i'} + \sum_{j\in  \overline{{\cal C}'}\cap {\cal C}''} Q_{j,i'} 
  + \sum_{j\in  \overline{{\cal C}''} } Q_{j,i} \\
  &=&  \sum_{j\in  \overline{{\cal C}'}\cap {\cal C}'}
  Q_{j,i} + \sum_{j\in  \overline{{\cal C}'}\cap{\cal C}''} Q_{j,i'} + \sum_{j\in
    \overline{{\cal C}''} } Q_{j,i} \\
  &=& \sum_{j \in {\cal C}_i \setminus ( \overline{{\cal C}'}\cap {\cal
      C}'') } Q_{j,i} + \sum_{j\in  \overline{{\cal C}'}\cap {\cal C}''}
  Q_{j,i'}\\
  &>& \sum_{j \in {\cal C}_i \setminus ( \overline{{\cal C}'}\cap {\cal
      C}'') } Q_{j,i} + \sum_{j\in  \overline{{\cal C}'}\cap {\cal C}''}
  Q_{j,i} = Cost,
\end{eqnarray*}

where the inequality follows by observing that 
\begin{itemize}
\item there is at least one $j\in \overline{{\cal C}'}\cap {\cal
    C}''$, otherwise the two partitions would coincide;
 \item on instance ${\cal I}_1$ we have that for every $j \in {\cal C}''$, $Q_{j,i'} > Q_{j,i}$.
\end{itemize} 
\end{proof}

We now define the instance ${\cal I}_2 \in {\cal I}$ for the pair
$(a_i,a_{i'})$ in the following way:

$$
\begin{array}{lll}
{ {\cal I}_2}: & Q_{j,i'}  = u & \mbox{ for each } j \in {\cal C}_i \\
  & Q_{j,i} = Q_{j,i'} = u & \mbox{ for each } j \in  {\cal C}' \\
  & Q_{j,i} = Q_{j,i'} - 1 = u - 1 & \mbox{ for each } j \in  {\cal C}''
\end{array}
$$
where $\mathcal{C_i}, \mathcal{C}', \mathcal{C}''$, and $u$ are set as
before.  By construction, instance ${\cal I}_2$ has now the property
that for any $j\in {\cal C}''$, $Q_{j,i'} < Q_{j,i}$.

\begin{example}\label{ex:due}
Consider again the pair $(a_i, a_{i'})$ on the same set of colors
${\cal C}_i$  and same 
partition ${\cal C}'= \{ 2,4,5,8\}$, ${\cal C}''= \{1,3,6,7\}$, and
same $u=2 $, exactly as in Example~\ref{ex:uno}.
Instance ${\cal I}_2$ will be as follows (the cost of the optimal
solution is equal to $12$ and highlighted in bold):
\begin{table}[H]
\centering
\begin{tabular}{|l|l|l|l|l|l|l|l|l|}
\hline
colors                       & 1 & 2 & 3 & 4 & 5 & 6 & 7 & 8   \\ \hline
\# items for $a_{i}$         & 2 & {\bf 2} & 2 & {\bf 2} & {\bf 2} & 2 & 2 & {\bf 2}  \\ \hline
\# items for $a_{i'}$          & {\bf 1} & 2 & {\bf 1} & 2 & 2 & {\bf 1} & {\bf 1} & 2  \\ \hline
\end{tabular}
\end{table}
\end{example}

{ Observe that, from $a_i$ point of view, instances ${\cal
    I}_1$ and ${\cal I}_2$ are indistinguishable. Nevertheless, the
  optimal solution for instance ${\cal I}_2$ is to assign to $a_i$ the
complement set of indices with respect to the optimal solution to
instance ${\cal I}_1$.}

Analogously as the previous lemma we can prove the following result.

\begin{lemma}\label{l:optdue}
There is only one optimal solution for instance ${\cal I}_2$: assign
colors in ${\cal C}'$ to $a_{i'}$ and colors in ${\cal C}''$ to
$a_{i}$. 
\end{lemma}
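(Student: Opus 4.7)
The plan is to mirror almost verbatim the proof of Lemma~\ref{l:optuno}, exploiting the fact that ${\cal I}_2$ is the same combinatorial object as ${\cal I}_1$ with the roles of $a_i$ and $a_{i'}$ swapped; correspondingly, the direction of the strict inequality between $Q_{j,i}$ and $Q_{j,i'}$ on ${\cal C}''$ is reversed. So I expect to reuse the same algebraic manipulation with a single sign change, and the only care needed is to keep track of which agent receives which subset.

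First I would invoke Lemma~\ref{le:pair} to restrict attention to assignments that send the whole of ${\cal C}_i$ to the pair $(a_i,a_{i'})$; together with the balanced-coloring constraint this forces any candidate solution to split ${\cal C}_i$ into two subsets of size $t/2$, one assigned to $a_i$ and the other to $a_{i'}$. I would then compute the cost of the proposed optimum (${\cal C}'\!\to a_{i'}$, ${\cal C}''\!\to a_i$): since items assigned to $a_{i'}$ must be shipped from $a_i$ and vice versa, the cost is $\sum_{j\in{\cal C}'} Q_{j,i} + \sum_{j\in{\cal C}''} Q_{j,i'}$, which, using $Q_{j,i}=Q_{j,i'}$ on ${\cal C}'$, collapses to $\sum_{j\in{\cal C}_i} Q_{j,i'}$.

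Next, for an arbitrary balanced partition $(\overline{{\cal C}'}\!\to a_{i'},\ \overline{{\cal C}''}\!\to a_i)$, I would split $\overline{{\cal C}'}=(\overline{{\cal C}'}\cap {\cal C}')\cup(\overline{{\cal C}'}\cap {\cal C}'')$ in the first summand of $\overline{Cost}$ and again apply $Q_{j,i}=Q_{j,i'}$ on ${\cal C}'$, obtaining
$$\overline{Cost}=\sum_{j\in {\cal C}_i\setminus(\overline{{\cal C}'}\cap{\cal C}'')} Q_{j,i'} \;+\; \sum_{j\in\overline{{\cal C}'}\cap{\cal C}''} Q_{j,i}.$$
Comparing with $Cost=\sum_{j\in{\cal C}_i} Q_{j,i'}$, the two expressions differ only on $\overline{{\cal C}'}\cap{\cal C}''$, where the former uses $Q_{j,i}$ and the latter uses $Q_{j,i'}$. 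By definition of ${\cal I}_2$, $Q_{j,i}>Q_{j,i'}$ for every $j\in{\cal C}''$, so each such term strictly increases the cost; moreover $\overline{{\cal C}'}\cap{\cal C}''\neq\emptyset$ unless $\overline{{\cal C}'}={\cal C}'$, because both sets have cardinality $t/2$ inside ${\cal C}_i$. Hence $\overline{Cost}>Cost$ for every partition distinct from the claimed one, proving uniqueness and optimality.

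There is no real obstacle: the only thing to watch is that in ${\cal I}_2$ the strict inequality flips direction (now $a_i$ holds strictly more items than $a_{i'}$ on ${\cal C}''$), which is precisely what makes it cheaper to keep ${\cal C}''$ at $a_i$ rather than at $a_{i'}$; consequently the optimal assignment is the ``complement'' of the one in Lemma~\ref{l:optuno}, consistently with the remark that ${\cal I}_1$ and ${\cal I}_2$ are indistinguishable from $a_i$'s local viewpoint.
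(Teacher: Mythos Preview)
Your proposal is correct and follows exactly the approach the paper takes: the paper's own proof simply says it is ``very similar to that of Lemma~\ref{l:optuno}'', computes $Cost=\sum_{j\in{\cal C}_i}Q_{j,i'}$, and notes that the only change is that on ${\cal C}''$ one now has $Q_{j,i'}<Q_{j,i}$, which is precisely the swap of roles you carry out. Your explicit invocation of Lemma~\ref{le:pair} to justify restricting to partitions of ${\cal C}_i$ is a small but welcome clarification that the paper leaves implicit in both lemmas.
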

\begin{proof}
The proof is very similar to that of Lemma~\ref{l:optuno}.
The cost of the solution defined in the statement is now:

$$ Cost = \sum_{j\in {\cal C}'} Q_{j,i} +  \sum_{j\in {\cal C}''} Q_{j,i'}  
        = \sum_{j\in {\cal C}'} Q_{j,i'} + \sum_{j\in {\cal C}''} Q_{j,i'} 
        = \sum_{j \in {\cal C}_i}Q_{j,i'} . $$
The cost of any other solution is calculated as in the proof of Lemma~\ref{l:optuno} 
with the exception that now instance ${\cal I}_2$ has the property that for any $j\in  {\cal C}''$, 
$Q_{j,i'} < Q_{j,i}$. 
\end{proof}

The core of the lower bound's proof lies in the simple observation
that agent $a_i$ is not able to distinguish between instance ${\cal
  I}_1$ and instance ${\cal I}_2$ without knowing also the quantities
$Q_{j,i'}$ for colors $j$ falling into partition ${\cal C}''$.

\begin{lemma}
\label{no_optimal}
If agent $a_i$ knows 
at most $t/2$ colors held by $a_{i'}$, 
it cannot compute its optimal assignment of colors.
\end{lemma}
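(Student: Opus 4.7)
The plan is to exhibit two instances drawn from the family ${\cal I}$ that are consistent with all the information $a_i$ can possibly access, yet admit two different unique optima; any deterministic procedure at $a_i$ will therefore fail on at least one of them.

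I would fix the pair $(a_i, a_{i'})$ together with its color class ${\cal C}_i$ and let $K\subseteq {\cal C}_i$ be the set of indices for which $a_i$ has learned the value $Q_{j,i'}$, so that $|K|\le t/2$ by hypothesis. Observe that the partition $({\cal C}', {\cal C}'')$ of ${\cal C}_i$ into equal halves is a free parameter of the instance construction and, crucially, is not inferable from $a_i$'s local data: on both instances of type ${\cal I}_1$ and ${\cal I}_2$ the agent $a_i$ simply stores $u$ items of each color in ${\cal C}_i$. Because $|{\cal C}'|=t/2\ge |K|$, the adversary is free to choose the partition so that $K\subseteq {\cal C}'$.

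With that partition in place, I would compare ${\cal I}_1$ and ${\cal I}_2$ side by side. In both instances the local view of $a_i$ is the same tuple indexed by ${\cal C}_i$, namely $Q_{j,i}=u$ for every $j$; and for every $j\in {\cal C}'$ (hence for every $j\in K$) both instances satisfy $Q_{j,i'}=u$. Consequently, all the information available to $a_i$---its own local counts together with the at most $t/2$ values of $Q_{j,i'}$ it has collected---is identical in the two scenarios. However, Lemma~\ref{l:optuno} states that the unique optimum of ${\cal I}_1$ assigns ${\cal C}'$ to $a_i$, whereas Lemma~\ref{l:optdue} states that the unique optimum of ${\cal I}_2$ assigns the disjoint set ${\cal C}''$ to $a_i$. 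Since $a_i$'s input is the same in both instances, no deterministic procedure can return the right answer in both cases, which proves the lemma.

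The main technical subtlety is the step in which the adversary's partition is aligned with the set $K$ revealed to $a_i$: it relies on viewing the partition as a free parameter of the family ${\cal I}$, and it is exactly here that the threshold $t/2$ becomes tight. If $a_i$ were allowed to learn $t/2+1$ of $a_{i'}$'s counts, any set $K$ would necessarily intersect ${\cal C}''$ and thereby reveal a value of $Q_{j,i'}$ that distinguishes ${\cal I}_1$ (where it equals $u+1$) from ${\cal I}_2$ (where it equals $u-1$), so the indistinguishability argument would break down.
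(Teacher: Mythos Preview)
Your argument is correct and follows the same strategy as the paper: place the revealed indices $K$ inside ${\cal C}'$ (padding ${\cal C}'$ arbitrarily if $|K|<t/2$), observe that $a_i$'s entire view---its own uniform counts $Q_{j,i}=u$ together with the revealed values $Q_{j,i'}=u$ for $j\in K\subseteq{\cal C}'$---is identical under ${\cal I}_1$ and ${\cal I}_2$, and then invoke Lemmas~\ref{l:optuno} and~\ref{l:optdue} to conclude that the unique optima disagree on $a_i$'s color set. Your closing paragraph on the tightness of the $t/2$ threshold is a nice addition not present in the paper's proof.
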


\begin{proof} 
Construct a partition of ${\cal C}_i$ in the following way: place index
$j$ in ${\cal C}'$ if $a_i$ has knowledge of $Q_{j,i'}$ and in ${\cal C}''$ in the other case. 
If the cardinality of ${\cal C}'$ is smaller than $t/2$, arbitrarily add indices
to reach cardinality $t/2$.
Agent $a_i$ cannot distinguish between instances ${\cal I}_1$ and ${\cal I}_2$ constructed 
according to this partition of ${\cal C}_i$ and, hence, by lemmas \ref{l:optuno} and \ref{l:optdue}
cannot decide whether it is better to keep colors whose indices are in
${\cal C}'$ or in ${\cal C}''$. Finally,
observe that in both instances indices in ${\cal C}'$ are exactly
in the same position in the ordering of the colors held
by $a_{i'}$, thus the knowledge of these positions does not help.
\end{proof}

\elimina{

\begin{lemma}
\label{l:pair}
Let {\tt A} be a distributed algorithm for the problem. If during the
execution of {\tt A} there is at least one pair $(a_i,a_{i'})$ that
exchanges information about no more than $t/2$ of
their initially held colors, then the algorithm does not compute the
optimal solution.
\end{lemma}

\begin{proof}
Assume there is one pair $(a_i,a_{i'})$ for which at most $t/2 = m/n$
colors are
communicated. We show that there is an adversary
that puts $a_i$ in the situation described in Corollary~\ref{no_optimal}.
The adversary simulates communication between agents, by answering 
to $a_i$ inquires to know the $Q_{h,k}$s. 
The adversary chooses a partition in ${\cal I }$ and one between the corresponding ${\cal I}_1$ 
and ${\cal I}_2$, answering accordingly and 

{\tt cosa vuol dire la parte finale di questa frase?}

placing the first $m/n$ requests
of $Q_{j,i}$ in ${\cal C}'$. Whenever $a_i$ asks for some $Q_{j,h}$ with
$h \not= i, i'$, then the adversary answers with a fixed constant
$K$ if $j \not\in {\cal C}_i$, with zero otherwise. By Corollary~\ref{no_optimal},
it is clear that the pair $(a_i,a_{i'})$ (and thus {\tt A}) cannot compute an
optimal solution.
\end{proof}

}

\begin{theorem}
\label{msgcompl} The message complexity of the distributed color
assignment  problem on ring is $\Omega(mn)$.
\end{theorem}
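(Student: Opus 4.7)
The plan is to combine Lemma~\ref{no_optimal} with a counting argument on the information that must flow across the ring. I would restrict attention to the family $\mathcal{I}$, assuming $n$ even and $m$ a multiple of $n$ so that $t=2m/n$ is an integer. This family partitions the ring into $n/2$ antipodal pairs $(a_i,a_{i'})$ associated with the pairwise disjoint color sets ${\cal C}_0,\ldots,{\cal C}_{n/2-1}$, each of cardinality $t$. By Lemma~\ref{no_optimal}, any correct algorithm must guarantee that, at the end of its execution, each agent $a_i$ has acquired knowledge of at least $t/2+1$ of the quantities $\{Q_{j,i'}\mid j\in {\cal C}_i\}$; otherwise it would be unable to separate the matched instances ${\cal I}_1$ and ${\cal I}_2$ whose unique optima, by Lemmas~\ref{l:optuno} and~\ref{l:optdue}, are complementary.

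Next, I would exploit the fact that only $a_{i'}$ initially possesses the values $Q_{j,i'}$. Therefore, each of the $t/2+1$ values that must reach $a_i$ has to be forwarded from $a_{i'}$ to $a_i$ by a sequence of link-by-link basic-message transmissions. Because the two agents are at maximum ring distance $n/2$, every such piece of information must cross at least $n/2$ links, irrespective of whether the algorithm routes it clockwise, counter-clockwise, or splits it between the two arcs. Moreover, a basic message carries only $O(\log n)$ bits and, by the model's definition, encodes at most a constant number of color-index/count items, so any single basic-message transmission on a link can ``carry'' at most a constant number of distinct pieces of information.

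Summing over the $n/2$ pairs, the total number of (piece, link) incidences required by any correct algorithm is at least
\begin{equation*}
\frac{n}{2}\cdot\left(\frac{t}{2}+1\right)\cdot\frac{n}{2}\;=\;\Omega(tn^2)\;=\;\Omega(mn),
\end{equation*}
using $t=2m/n$. Since each actual basic-message transmission accounts for only $O(1)$ such incidences, the overall message complexity is $\Omega(mn)$. I expect the main obstacle to be the second step: making rigorous the claim that the constant bit-budget of a basic message, combined with the disjointness of the ${\cal C}_i$'s, prevents the algorithm from ``piggybacking'' information for different colors (or different pairs) onto a single transmission beyond the $O(1)$ factor already accounted for. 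Once this is settled, the counting above yields the bound uniformly in both the synchronous and the asynchronous model, since the argument is purely combinatorial and does not rely on any timing assumption.
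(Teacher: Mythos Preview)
Your overall plan mirrors the paper's proof: restrict to the family $\mathcal{I}$, invoke Lemma~\ref{no_optimal} to force each of the $n/2$ antipodal pairs to exchange information about more than $t/2=m/n$ colors, and multiply by the $n/2$ links separating the pair. The only substantive divergence is in how you convert ``information about $m/n$ colors'' into ``$\Omega(m/n)$ basic messages.'' You count discrete pieces and assert that a basic message carries $O(1)$ of them; the paper instead lower-bounds the number of \emph{bits} that must cross between $a_i$ and $a_{i'}$ via a Shannon-entropy computation, obtaining $B=\log\binom{m}{m/n}\in\Omega\bigl((m/n)\log n\bigr)$, and then divides by the $c\log n$ bits carried by a basic message to get $\Omega(m/n)$ messages per pair.

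The obstacle you flag is real, and it is precisely what the entropy step is meant to resolve. In the $\mathcal{I}_1/\mathcal{I}_2$ construction the relevant information about each color is essentially one bit (does $j$ lie in $\mathcal{C}'$ or in $\mathcal{C}''$?), so nothing in the model forbids an algorithm from packing $\Theta(\log n)$ such bits into a single $c\log n$-bit basic message; the disjointness of the $\mathcal{C}_i$'s does not help here, since the issue arises already within a single pair. Your ``$O(1)$ pieces per transmission'' claim therefore cannot be justified as stated, and taken at face value your counting would only deliver $\Omega(mn/\log n)$. The paper sidesteps this by working at the bit level from the outset: once the per-pair bit requirement is $\Omega\bigl((m/n)\log n\bigr)$, the $\log n$ in the message capacity cancels and the $\Omega(m/n)$ basic-message count follows. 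Replacing your piece-counting step with the entropy computation is exactly the fix you need, and the rest of your argument then goes through unchanged.
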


\begin{proof}
  Let {\tt A} be any distributed algorithm for the problem running on
  instances in ${\cal I}$.  By the end of the execution of {\tt A},
  each agent has to determine its own assignment of colors.  Fix any
  pair $(a_i,a_{i'})$ and consider the time at which agent $a_i$
  decides its own final assignment of colors.  Assume that at
  this time $a_i$ knows information about at most 
  $t/2  = m/n$ colors of agent $a_{i'}$.  By
  Lemma~\ref{no_optimal}, it cannot determine an assignment of colors
  for itself yielding the optimal solution.

  Therefore, for all $n/2$ pairs $(a_i,a_{i'})$, agent $a_i$ has to
  get information concerning at least $m/n$ of the colors held by
  $a_{i'}$. We use Shannon's Entropy to compute the minimum number
  of bits $B$ to be exchanged between any pair $(a_i,a_{i'})$ so that
  this amount of information is known by $a_i$. We have:

$$ B = \log {{m}\choose {\frac m n }}.$$

Using Stirling's approximation and the inequality $m \geq n$, we get

\begin{eqnarray*}
B & \approx &  \frac m n  \cdot \log \frac{m\cdot n}{m+n} \\
&\geq&  \frac m n  \cdot \log \frac n 2 \in \Omega\left( \frac m n \log n \right).
\end{eqnarray*}

As a basic message contains $\log n$ bits, any pair $(a_i,a_{i'})$
needs to exchange at least $\Omega(m/n)$ basic messages. Each such
message must traverse $n/2$ links of the ring to get to one agent of
the pair to the other. As we have $n/2$ pairs of agents, the lower
bound on message complexity is given by

$$\Omega(m/n) \cdot \frac n 2 \cdot \frac n 2 \in \Omega(m\cdot n).$$

\elimina{
\item Assume that during the execution of {\tt A} at least $\Omega(n)$
  pairs  $(a_i,a_{i'})$ communicate and that the communication consists of
exchanging information about at least $m/n +1$ colors.
In this case, we need $n/2$ messages to communicate one color from
$a_i$ to $a_{i'}$ or vice-versa, hence we have at least
$n/2 \cdot (m/n +1) \cdot \Omega(n) \in \Omega(m\cdot n )$ messages.

\item Assume the hypothesis above about communication does not hold, then
there must be at least one  pair $(a_i,a_{i'})$ for which at most $m/n$
colors are  communicated. Now show that there is an adversary that  puts
$a_i$ in the situation described in Corollary~\ref{no_optimal}. 
We assume that $a_i$ asks the adversary to
know $Q_{h,k}$ (it is like the adversary simulates communication
between agents). The adversary chooses a
partition in ${\cal P}$ and answers accordingly to $I_u$ (or $I_l$),
placing the first $m/n$ requests of $Q_{j,i'}$ in ${\cal
  C}'$. Whenever $a_i$ asks for some $Q_{j,h}$ with $h\not = i,i'$,
then the adversary answers with a fixed constant $K$.

Hence, by Corollary~\ref{no_optimal}, it is 
clear that the pair $(a_i,a_{i'})$ cannot compute an optimal solution.

\end{itemize}
}
\end{proof}

\elimina{
Theorem~\ref{msgcompl} can be generalized to handle any topology:
given a graph with diameter $D$ and $m$ colors, the balanced color
assignment problem on that topology has message complexity
$\Omega(mD)$. (?)
}

\section{A distributed message-optimal algorithm}
\label{Alg}

In this section we first describe an algorithm that exhibits optimal
message complexity on synchronous ring. We will then show how to adapt
the algorithm to the case of an asynchronous ring. In the next section
we will prove that the algorithm is guaranteed to compute an
approximation of the color assignment that is within a factor three
from the optimal solution (for both synchronous and asynchronous
ring).

\subsection{Synchronous ring}


At a high level,  the algorithm consists of three phases: 
in the first phase, the algorithm elects a leader $a_0$ among the set of agents. 
The second phase of the algorithm is devoted to estimate the parameter $p
= \max_i \max _j Q_{j,i}$, {\em i.e.} the maximum number of items of a
given color held by agents.  Finally, the last phase performs the
assignment of colors to agents in such a way to be consistent with
Definition~\ref{def:BC}.  In the following we describe the three
phases in detail.

\medskip

{\noindent {\bf Algorithm {\tt Sync-Balance}}}

\medskip

\textbf{Phase 1.}  The first phase is dedicated to leader election
that can be done in $O(n)$ time with a message complexity of $O(n\log
n)$ on a ring of $n$ nodes, even when the nodes are not aware of the
size $n$ of the ring \cite{HS}. 

Leader election has also been studied in 
arbitrary wired networks \cite{GHS}.
An $O(n$ polylog$(n))$ time deterministic algorithm is available even for
ad hoc radio networks of {\em unknown and arbitrary} topology without a collision
detection mechanism, even though 
the size of the network must be known to the algorithm code 
(see \cite{CKP} for the currently best result).

Without loss of generality, in the following we will assume that agent
$a_0$ is the leader and that $a_1, a_2, ..., a_{n-1}$ are the other
agents visiting the ring clockwise. In the rest of this paper, we will
refer to agent $a_{i-1\mathrm{\ mod\ }n}$ (resp. $a_{i+1\mathrm{\ mod\
  }n}$) as to the \emph{preceding} (resp. \emph{following})
\emph{neighbor} of $a_i$.

\medskip

\textbf{Phase 2.} In this phase agents agree on an upper bound $p'$ of
$p$ such that $p' \leq 2p$.

Given any agent $a_i$ and an integer $r\geq 0$, we define:
\[ B_i(r) = \left\{ \begin{array}{ll}
    1 & \mbox{ if  $\max_{j} Q_{j,i} = 0$  and $r = 0$; } \\
    1 & \mbox{ if $2^r \leq \max_{j} Q_{j,i} < 2^{r+1}$ and $r > 0$; }\\
    0 & \mbox{ otherwise.}
                        \end{array} \right.
\]

\medskip

This phase is organized in consecutive stages labeled $0, 1, \ldots$  
At stage $r=0$, the leader sets an 
integer variable $A$ to zero, which will be updated at the end of each
stage and used to determine when to end this phase. 

In stage $r \geq 0$, agent $a_i$, for $i = 0, 1, \ldots, n-1$, waits
for $i$ time units from the beginning of the stage. At that time a
message $M$ might arrive from its preceding neighbor. If no message
arrives, then it is assumed that $M = 0$. Agent $a_i$ computes $M = M
+ B_i(r)$ and, at time unit $i+1$, sends $M$ to its following neighbor
only if $M > 0$, otherwise it remains silent.

After $n$ time slots in stage $r$, if the leader receives a message $M
\leq n$ from the preceding neighbor, then it updates variable $A = A +
M$, and, if $A < n$, proceeds to stage $r+1$ of Phase 2; otherwise
it sends a message clockwise on the ring containing the index of the
last stage $\ell$ performed in Phase 2. Each agent then computes $p' =
2^{\ell+1}$, forwards the message clockwise, waits for $n-i+1$ time units and
then proceeds to Phase 3.

\begin{lemma}
\label{l:ph2}
Phase 2 of Algorithm {\tt Sync-Balance} computes an upper bound $p'$
of $p$ such that $p' \leq 2p$ within $O(n \log p)$ time units and
using $O(n^2)$ basic messages.
\end{lemma}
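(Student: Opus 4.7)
The plan is to establish correctness of the computed bound $p'$, and then to analyze time and message complexity separately. For each agent $a_i$ let $p_i = \max_j Q_{j,i}$, and observe that by the definition of $B_i(\cdot)$ there is exactly one nonnegative integer $r_i$ with $B_i(r_i) = 1$, namely $r_i = \lfloor \log_2 p_i \rfloor$ when $p_i > 0$ and $r_i = 0$ when $p_i = 0$. Hence $\sum_i B_i(r) = |\{i : r_i = r\}|$ for every $r$. I would first prove by induction on $r$ that at the end of stage $r$ the leader's counter $A$ equals $|\{i : r_i \leq r\}|$: the inductive step uses the fact that within stage $r$ the traveling sum visits $a_0, a_1, \ldots, a_{n-1}$ in order (each agent waits $i$ time units before adding $B_i(r)$ and forwarding), so the value the leader receives at the end of stage $r$ is exactly $\sum_i B_i(r)$. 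Since $A$ is non-decreasing and bounded by $n$, it first reaches $n$ precisely at stage $\ell = \max_i r_i$, and the algorithm then sets $p' = 2^{\ell+1}$. Assuming $p\geq 1$ (the only meaningful case), the definition of $\ell$ gives $2^\ell \leq p < 2^{\ell+1}$, whence $p < p' = 2\cdot 2^\ell \leq 2p$, establishing the required bound.

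For the time complexity the argument is immediate: each stage consists of the $n$ sequential waits along the ring and therefore takes exactly $n$ time units, independently of how many agents actually forward a message; there are $\ell + 1 = O(\log p)$ stages, plus the final $O(n)$ broadcast of $\ell$ clockwise along the ring, yielding $O(n\log p)$ time units in total.

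The main obstacle, and the subtlest part of the argument, is the message bound, because the number of stages $\ell + 1$ is not a priori bounded by $n$. The key observation is that each agent $a_i$ contributes a nonzero value $B_i(r)$ for a single $r$, namely $r_i$; consequently only the distinct elements of the multiset $\{r_0,\ldots,r_{n-1}\}$ index stages in which any message is sent at all, and there are at most $n$ such stages. Within any such stage the running sum is forwarded along at most $n$ consecutive links of the ring (once it becomes positive it propagates to the leader), contributing at most $n$ basic messages. Multiplying gives $O(n^2)$ basic messages for the stages, and the final broadcast of $\ell$ adds another $O(n)$ basic messages, for the claimed $O(n^2)$ total.
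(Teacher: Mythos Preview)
Your proof is correct and follows essentially the same approach as the paper: both identify the unique stage $r_i$ at which each agent ``speaks up,'' deduce that $A$ first reaches $n$ at $\ell=\max_i r_i$ (giving $2^\ell\le p<2^{\ell+1}$ and hence $p'\le 2p$), and obtain the $O(n^2)$ message bound from the fact that at most $n$ of the $O(\log p)$ stages are non-silent. Your inductive statement for $A$ and your explicit handling of the $p\ge 1$ assumption are slightly more formal than the paper's version, but the argument is the same.
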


\begin{proof}
  We will say that agent $a_i$ {\it speaks up} in stage $r$ when
  $B_i(r) = 1$.  Throughout the execution of the algorithm, integer
  variable $A$ records the number of agents that have spoken up so
  far.

Any agent $a_i$ speaks up in one stage only.  Indeed, given the color $j'$ for which agent 
$a_i$ has the maximum number of items, then $B_i(r) = 1$ only at stage $r$ such that $Q_{j',i}$ falls in the  
(unique) interval $[2^r, 2^{r+1})$. 
Let $a_{i^*}$ be the agent having the largest amount of items of the same color among all agents, {\em i.e.}, 
such that  $Q_{j^*,i^*} = p$, for some $j^*\in [0,m-1]$.  
Then $B_{i^*}(r) = 1$ for stage $r$ such that  $2^r \leq p < 2^{r+1}$, {\em i.e.}, agent $a_{i^*}$ 
speaks up when $r =  \ell$. Observe that at the end of stage $\ell$ the leader sets $A = n$, as all $n$ 
agents must have spoken up by that time.
Therefore, considering  also the last extra stage in which the agents are 
informed of the value of $\ell$, Phase 2 ends after $\ell+2$ stages, i.e. $n (2+ \log p)$ time units.


For what concerns message complexity, in each stage, for $r = 0,
\ldots, \log p$, either no messages are sent, or a message traverses a
portion of the ring. Observe that, as each agent speaks up only once
during this phase, messages circulating on the ring must always be
originated by different agents. Hence, the number of stages in which a
message circulates on the ring is at most $n$ and there must be at
least $\max \{ 0, \log p - n\}$ silent stages. In conclusion, Phase 2
message complexity is bounded by $O(n^2)$.

As for the ratio between the actual value of $p$ and its approximation $p'$ computed in Phase 2, 
by construction we have that $2^{\ell} \leq p$ and 

$$ p' = 2^{\ell + 1} = 2 \cdot 2^{\ell} \leq 2 p. $$

\end{proof}

\medskip

\textbf{Phase 3.}  As a preliminary step, each agent $a_i$ computes
the number of colors it will assign to itself and stores it in a
variable ${\cal K}_i$. Namely, each agent $a_i$, for
$i=0,1,\ldots,n-1$ computes $g = \left( \lfloor \frac m n \rfloor + 1
\right) n - m$ and then sets ${\cal K}_i$ as follows (recall
Definition~\ref{def:BC}):
\begin{equation}
\label{eq:Ki} 
{\cal K}_i  = \left\{ \begin{array}{ll}
                        \lfloor \frac m n \rfloor  & \mbox{ if $i < g$}
                        ; \\
                        \lfloor \frac m n \rfloor + 1 & \mbox{ otherwise.}
                        \end{array} \right.
\end{equation}
In the rest of this phase, the agents agree on a color assignment such that each agent $a_i$
has exactly ${\cal K}_i$ colors.  Algorithms~\ref{B3.1} and \ref{B3.2} report the
pseudo-code of the protocol performed by a general agent $a_i$ in this
phase and that is here described.

\medskip

Let $p'$ be the upper bound on $p$ computed in Phase 2.  Phase 3
consists of $\log p' + 1 $ stages.  In each stage $r$, for $r =
0,\ldots, \log p'$, the agents take into consideration only colors
whose weights fall in interval $I_r = [l_r, u_r)$ defined as follows:

\begin{equation}
\label{eq:intervals}
\left\{
\begin{array}{lll}
   I_0 &=& \left [\frac{p'}{2}, +\infty \right ),\\ 
   I_r &=& \left [\frac {p'}{2^{r+1}}, \frac {p'}{2^{r}} \right ) \mbox{ for } 0 < r < \log p'\\
   I_{\log p'} &=& [0, 1)
\end{array} \right.
\end{equation}

Observe that in consecutive stages, agents consider weights in decreasing order,  as
$u_{r+1} \leq l_r$. 

\medskip

At the beginning of each stage $r$, all agents have complete knowledge of
the set of colors ${\cal C}_{r-1}$ that have already been assigned to some
agent in previous stages. At the beginning of this phase, ${\cal C}_{-1}$  is the
empty set, and after the last stage is performed, ${\cal C
}_{\log p'}$ must be the set of all colors. 

Stage $r$ is, in general, composed of two steps; however, the second step
might not be performed, depending on the outcome of the first one. In the
first step, the agents determine if there is at least one agent with a
weight falling in interval $I_r$, by forwarding a message around the
ring only if one of the agents is in this situation.  If a message
circulates on the ring in step one, then all agents proceed to step
two in order to assign colors whose weight fall in interval $I_r$ and
to update the set of assigned colors.  Otherwise, step two is
skipped. Now, if there are still colors to be assigned ({\em i.e.}, if
${\cal C}_r \not= {\cal C}$), all agents proceed to stage $(r+1)$;
otherwise, the algorithm ends.  In more details:

\medskip

\noindent {\sc Step 1.}  Agent $a_i$ (leader included) waits $i$ time
units (zero for the leader) from the beginning of the stage, and then
acts according to the following protocol: 
\begin{description}
\item[Case 1:] If $a_i$ receives a
message from its preceding neighbor containing the label $k$ of some agent $a_k$, 
it simply forwards the same message to its
following neighbor and waits for $(n+k-i-1)$ time units; 
\end{description}

otherwise
\begin{description}
\item[Case 2:] If  $a_i$ has a  weight falling into interval $I_r$,
then it sends a message containing its label $i$ to its following neighbor  and
waits for $(n-1)$ time units; 
\end{description}

otherwise

\begin{description}
\item[Case 3:] It does nothing and waits for $n$ time units.  
\end{description}

If Case 1 or Case 2 occurred, then agent $a_i$ knows that {\sc step 2} is to be
performed and that it is going to start after waiting the designed time units. 

Otherwise, if Case 3 occurred, after $n$ units of time, agent $a_i$
might receive a message (containing label $k$) from its preceding
neighbor, or not.  If it does, then $a_i$ learns that Case 2 occurred
at some agent $a_{k}$ having label $k > i$ and that {\sc step 2} is to
be performed. Hence, it forwards the message to its following neighbor
in order to inform all agents having labels in the interval
$[i+1,\ldots,k-1]$, unless this interval is empty (meaning that $a_i$
was the last agent to be informed).  Then, after waiting for another
$(k-i-1)$ time units, agent $a_i$ proceeds to {\sc step 2}.  On the
contrary, if $a_i$ got no message, it learns that Case 2 did not occur
at any agent and hence, {\sc step 2} needs not be performed.  After
waiting for $(n-i)$ time units, $a_i$ can proceed to the next stage
$(r+1)$.

\medskip

Observe that, when {\sc step 2} has to be performed, {\sc step 1}
lasts exactly $n+k-1< 2n$ time units for all agents, where $k$ is the
smallest agent's label at which Case 2 occurs, while it lasts exactly
$2n$ time units for all agents in the opposite case.  Indeed,
referring to the pseudo-code in Algorithm~\ref{B3.1}, completion time
is given by the sum of the time units in the following code lines: in
Case 1 of lines 7 and 10 ($i\neq k$); in Case 2 of lines 7 and 15
($i=k$); in Case 3 of lines 7, 18 and 22 if agents proceed to {\sc
  Step 2} ($i\neq k$), and lines 7 and 26 otherwise.

As the time needed by agents to agree on skipping {\sc step 2} is
larger than the time needed to agree in performing it, it is not
possible that some agent proceeds to {\sc step 2} and some other to
stage $(r+1)$. On the contrary, agents are perfectly synchronized to
proceed to {\sc step 2} or stage $(r+1)$.

\medskip

\noindent {\sc Step 2}. When this step is performed, there exists a
non empty subset of agents having at least one weight falling into interval
$I_r$. Only these agents actively participate to the color assignment
phase, while the others just forward messages and update their list of
assigned colors. Color assignment is done using a greedy strategy:
agent $a_i$ assigns itself the colors it holds which fall into
interval $I_r$ and that have not been already assigned to other agents.
Once a color is
assigned to an agent, it will never be re-assigned to another one. 

\medskip

To agree on the assignment, the agents proceed in the following way:
agent $a_i$ creates the list ${\cal L}_{i,r}$ of colors it holds whose
weights fall into interval $I_r$ and that have not been assigned in
previous stages. Then, $a_i$ waits $i$ time units (zero for the
leader) from the beginning of the step. At that time, either $a_i$
receives a message ${\cal M}$ from its preceding neighbor or not. In
the first case, the message contains the set of colors assigned in
this stage to agents closer to the leader (obviously, this case can
never happen to the leader). Agent $a_i$ then checks if there are some
colors in its list ${\cal L}_{i,r}$ that are not contained in ${\cal
  M}$ (empty message in the case of the leader), and then assigns
itself as many such colors as possible, without violating the
constraint ${\cal K}_i$ on the maximum number of colors a single agent
might be assigned. Then, $a_i$ updates message ${\cal M}$ by adding
the colors it assigned itself, and finally sends the message to its
following neighbor.  If ${\cal L}_{i,r}$ is empty, or it contains only
already assigned colors, $a_i$ just forwards message ${\cal M}$ as it
was.  In both cases, $a_i$ then waits for a new message ${\cal M}'$
that will contain the complete list of colors assigned in this
stage. ${\cal M}'$ is used by all $a_i$ to update the list of already
assigned colors and is forwarded on the ring.  When the message is
back to the leader, stage $(r+1)$ can start.


\medskip

\begin{algorithm*}
\small
\singlespacing
\caption{{\tt Sync-Balance} - Phase 3 (performed by agent $a_i$)}
\label{B3.1}
\begin{algorithmic}[1]
\Require $p'$ computed in Phase 2 \Comment{upper bound to maximum number of
  items of the same color}
\State Compute ${\cal K}_i$  \Comment{Number of colors $a_i$ has to be assigned, as defined in Equation (\ref{eq:Ki})}
\State ${\cal C}_{-1} \gets \emptyset$  \Comment{set of colors assigned up to the  previous stage}
\For{$r=0$ {\bf to} $\log p'$}   
\State ${\cal L}_{i,r} = \{ c_j | \,\,\, c_j \not\in {\cal C}_{r-1} \mbox{ and
} Q_{j,i} \in I_r \}$ \Comment{Colors assignable to $a_i$ in stage $r$. Intervals $I_r$ are defined in (\ref{eq:intervals})}

\Statex \Comment{ { Begin of {\sc Step 1} }}
\State Wait $i$ time units
\If{Got message ${\cal M} = \{ k\}$ from its preceding neighbor} \Comment{ Case 1}
\Statex \Comment{$k<i $ is an agent label}

         \State Forward message ${\cal M}$ to its following neighbor
         \State Wait $n-i+k-1$ time units
         \State \cal{Step 2}{} \Comment{proceeds to {\sc Step 2} }
\Else 
         \If{${\cal L}_{i,r} \not= \emptyset$} \Comment{ Case 2}
         \State Send message ${\cal M} = \{ i \} $ to its following
         neighbor
         \State Wait $n-1$ time units
         \State \cal{Step 2}{} \Comment{proceeds to {\sc Step 2} }
         \Else \Comment{Case 3}
         \State Wait $n$ time units
         \If{Got message ${\cal M} = \{ k\}$ from its preceding neighbor}
                    \If{$k-i-1 > 0$} \Comment{informs other agents that Step 2 is to be performed}
                    \State Forward message ${\cal M}$  to its following
                    neighbor 
                    \State Wait for $k-i-1$
                    \State \cal{Step 2}{} \Comment{procedure call to {\sc Step 2} }
                    \EndIf   
                    \Else
                    \State Wait $n-i$ time units \Comment{proceeds to next stage skipping
           {\sc Step 2}} 
         \EndIf 
         \EndIf
\EndIf
\EndFor
\end{algorithmic}
\end{algorithm*}

\begin{lemma}\label{l:worst}
Let $K_r$ be the number of colors assigned in stage $r$ of Phase 3, then stage $r$ can be completed in at most 
$O(n)$ 
time units using at most 
$O\left(n\cdot \frac {K_r\log m}{\log n} \right)$ 
basic messages. 
\end{lemma}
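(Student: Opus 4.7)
The plan is to analyze the two steps of stage $r$ separately and then combine the bounds. I would first note the trivial case: if no agent has an unassigned color whose weight falls in $I_r$, then $K_r = 0$, every agent takes Case 3 in {\sc Step 1} with no message ever circulating, {\sc Step 2} is skipped, and the stage finishes in exactly $2n$ time units with zero basic messages, which trivially fits the claimed bound.

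Assume therefore $K_r \geq 1$. For {\sc Step 1} I would argue as follows. Let $k$ be the smallest index such that ${\cal L}_{k,r}\neq\emptyset$. By the waiting schedule (agent $a_i$ waits $i$ time units before acting), $a_k$ is the first agent to reach the decision point with no incoming message, so $a_k$ enters Case 2 at time $k$ and emits a single basic message carrying its label (which fits in $O(\log n)$ bits). This message is forwarded once per link along the ring, reaching every other agent exactly once before the end of {\sc Step 1}. Hence {\sc Step 1} contributes at most $n$ basic messages and terminates after at most $n+k-1 < 2n$ time units for every agent, as already observed after the description of the protocol.

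For {\sc Step 2} the main object to control is the growing message ${\cal M}$, and afterwards the broadcast message ${\cal M}'$. Both carry a subset of the colors assigned in stage $r$. Since each color index is an element of $\{0,\dots,m-1\}$ and can be encoded in $\lceil\log m\rceil$ bits, and since at most $K_r$ colors are ever placed inside either message, at every link traversal the message has bit length at most $O(K_r\log m)$, which by the model's convention costs $O(K_r\log m/\log n)$ basic messages per link. Each of ${\cal M}$ and ${\cal M}'$ traverses all $n$ links of the ring, giving a total of
\begin{equation*}
O\!\left(n\cdot \frac{K_r\log m}{\log n}\right)
\end{equation*}
basic messages for {\sc Step 2}. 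This dominates the $O(n)$ bound from {\sc Step 1}. For the time bound I would invoke the convention that each message (basic or not) traverses one link in one time unit; thus ${\cal M}$ completes its round in $n$ time units and so does ${\cal M}'$, and summing with the at most $2n$ time units of {\sc Step 1} gives the overall $O(n)$.

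The only delicate step is convincing the reader that ${\cal M}$ cannot carry more than $K_r$ colors (so the $\log m$ factor is not multiplied by $K_r$ unnecessarily) and that the time accounting in the synchronous model is unaffected by message length. Both follow directly from the protocol (agents only add colors they actually self-assign in stage $r$, whose total is $K_r$ by definition) and from the model stated in the introduction, where one time unit is defined as the time to traverse a link independently of message size.
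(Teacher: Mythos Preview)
Your proof is correct and follows essentially the same approach as the paper's: bound {\sc Step 1} by $O(n)$ basic messages (the paper states $n-1$) and at most $2n$ time units, then bound {\sc Step 2} by observing that the message carrying at most $K_r$ color identifiers of $O(\log m)$ bits each makes two passes around the ring, yielding $O\!\left(n\cdot \frac{K_r\log m}{\log n}\right)$ basic messages and $2n$ further time units. Your treatment is more explicit than the paper's---you spell out the trivial $K_r=0$ case and justify why $\mathcal{M}$ never exceeds $K_r$ entries and why message length does not affect the time count---but the underlying argument is the same.
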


\begin{proof}
  The bound on the time complexity follows straightforwardly by
  observing that each of the two steps requires at most $2n$ time
  units.

\medskip

For what concerns message complexity, {\sc Step 1} requires no
messages if {\sc Step 2} is skipped, and $n-1$ otherwise.  In fact,
only one basic message goes clockwise on the ring from $a_k$ to
$a_{k-1}$, where $k$ is the smallest index at which Case 2 occurs. The
worst case for {\sc Step 2} is the case in which the leader itself
assigns some colors, as a possibly long message containing color ID's
must go twice around the ring.  As there are $m$ colors, one color can
be codified using $\log m$ bits, then, sending $K_r$ colors requires
no more than $\frac {K_r\log m}{\log n}$ basic messages.  In
conclusion, the total number of basic messages is upper bounded by
$O\left(n\cdot \frac {K_r\log m}{\log n} \right)$.
\end{proof}

\begin{algorithm*}
\small
\singlespacing
\caption{{\tt Sync-Balance} - Phase 3  {\sc Step 2} (performed by agent $a_i$)}
\label{B3.2}
\begin{algorithmic}[1]

\Procedure{Step 2}{}
\State Wait $i$ time units
\If{Got message ${\cal M} $ from preceding neighbor with
  list of colors}
          \State ${\cal L}_{i,r} \gets {\cal L}_{i,r} \setminus
          {\cal M} $ \Comment{list of candidate colors to self assign}
          \Else
          \State Create empty message ${\cal M}$
          \EndIf
 \If{$|{\cal L}_{i,r} | \not= \emptyset$}
          \State Self assign maximum number of colors among those in
          ${\cal L}_{i,r}$ 
          \Statex \Comment{the total number of colors $a_i$ can assign itself is given by ${\cal K}_i$}
          \State Add self assigned colors to ${\cal M} $
          \EndIf
\If{${\cal M} \not= \emptyset$ }
          \State Send message ${\cal M}$  to the following neighbor
          \EndIf
\State Wait for message ${\cal M}'$ from preceding neighbor with
  list of colors
\State $C_r \gets C_{r-1} \cup {\cal M}'$ \Comment{updates set of
  assigned colors}
\State Forward message ${\cal M}'$  to the following neighbor
\If{${\cal C}_r = {\cal C}$} \Comment{all colors have been assigned}
        \State {\bf stop}
        \Else
        \State Wait for $n-i$ time units
        \EndIf
\EndProcedure
\end{algorithmic}
\end{algorithm*}

\begin{corollary}
\label{cor:ph3}
Phase 3 of Algorithm {\tt Sync-Balance} can be completed within $O(n\log p)$ 
time units and  using  $O(n m \cdot\frac{\log m }{ \log n})$ basic messages.
\end{corollary}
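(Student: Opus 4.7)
The plan is to obtain the corollary simply by summing the per-stage bounds given by Lemma~\ref{l:worst} over all stages of Phase 3, using two pieces of accounting: the bound on the number of stages coming from Phase 2, and the fact that each color is permanently assigned in at most one stage.

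First I would pin down the number of stages. Phase 3 consists of $\log p' + 1$ stages, where $p'$ is the estimate produced by Phase 2. By Lemma~\ref{l:ph2} we have $p' \le 2p$, so $\log p' \le 1 + \log p$, and hence the total number of stages is $O(\log p)$. Combining this with the $O(n)$ time bound per stage from Lemma~\ref{l:worst} immediately gives total running time $O(n \log p)$.

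For message complexity, I would sum $O\!\left(n \cdot \frac{K_r \log m}{\log n}\right)$ over all stages $r$, where $K_r$ denotes the number of colors newly assigned in stage $r$. The key observation is that once a color is assigned to an agent in some stage it is recorded in the global set ${\cal C}_r$ and is never considered again (this is enforced at line~4 of Algorithm~\ref{B3.1}, which removes already-assigned colors from ${\cal L}_{i,r}$); consequently the $K_r$'s are the sizes of a disjoint partition of a subset of the color set, so
\begin{equation*}
\sum_r K_r \le m.
\end{equation*}
Pulling the $n \log m / \log n$ factor out of the sum yields the claimed bound $O\!\left(n m \cdot \frac{\log m}{\log n}\right)$.

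I expect no substantive obstacle, only a small bookkeeping point worth making explicit: silent stages (those in which no agent has a weight in $I_r$ and Step~2 is skipped) contribute $0$ messages by the proof of Lemma~\ref{l:worst}, while non-silent stages already have their $O(n)$ Step~1 cost absorbed into $O(n K_r \log m / \log n)$ since $K_r \ge 1$ and we are working under the standing assumption $m \ge n$, so $\log m / \log n \ge 1$. Thus the per-stage bound of Lemma~\ref{l:worst} can be applied uniformly, and the summation above completes the argument.
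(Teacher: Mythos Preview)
Your proposal is correct and follows essentially the same approach as the paper: sum the per-stage bounds of Lemma~\ref{l:worst} over all $O(\log p)$ stages (using $p'\le 2p$ from Lemma~\ref{l:ph2}) for the time bound, and use $\sum_r K_r \le m$ to collapse the message bound. The paper in fact states $\sum_r K_r = m$ (since all colors are eventually assigned), but your inequality suffices; your extra bookkeeping remark about silent versus non-silent stages is a helpful clarification that the paper leaves implicit.
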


\begin{proof}
It will suffice to sum up the worst cases for message and time complexity 
from Lemma~\ref{l:worst} over all stages 
$r=0, \ldots, \log p'$, where $p' \leq 2p$ (Lemma~\ref{l:ph2}). 

The upper bound on the time complexity is straightforward.
Let $K_r$ be defined as in the statement of Lemma~\ref{l:worst}, i.e. as
the number of colors assigned in a generic stage $r$ of Phase 3.
The upper bound on the message complexity follows by observing that 
$\sum_{r=0}^{\log p' } K_r = m $, as the total number of assigned colors 
during the $\log p' +1 $ stages is exactly the given number of colors. 

\end{proof}


We are now ready to prove that our algorithm is correct. In
Section~\ref{Approx} we will evaluate the ratio of the cost of the
solution found by this algorithm and the one of the optimal solution.

\begin{theorem}
Assuming $m \in O(n^c)$, for some constant $c$, Algorithm {\tt Sync-Balance} finds a feasible solution to the balanced
color assignment  problem in time $O(n\log p)$ using $\Theta(mn)$ messages.
\end{theorem}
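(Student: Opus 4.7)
The plan is to combine the three-phase analysis already established, adding the time and message costs of each phase, and then match the message total against Theorem~\ref{msgcompl} for the lower bound. First I would verify feasibility, i.e.\ that the output satisfies Definition~\ref{def:BC}: every color is assigned to exactly one agent (once a color enters the global assigned set $\mathcal{C}_r$ it is removed from every future candidate list $\mathcal{L}_{i,r'}$, and Step~2 of Algorithm~\ref{B3.2} never overwrites), and the number of colors held by each agent lies in $\{\lfloor m/n\rfloor,\lceil m/n\rceil\}$ (each agent self-assigns at most $\mathcal{K}_i$ colors by the explicit guard, and $\sum_i \mathcal{K}_i = m$ follows directly from~(\ref{eq:Ki}), so once all $m$ colors are distributed the quota must be tight for every agent). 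The delicate point is that the very last stage $r=\log p'$ uses the interval $[0,1)$, which, since the weights are non-negative integers, is exactly the set of colors for which $Q_{j,i}=0$; any still-unassigned color therefore belongs to $\mathcal{L}_{i,\log p'}$ for every agent that holds no item of it, and a residual-capacity count using $\sum_i\mathcal{K}_i = m$ shows that the greedy pass in Step~2 must consume exactly the remaining colors and fill exactly the remaining quotas.

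For the time bound, Phase~1 runs in $O(n)$ time by the standard ring leader election (\cite{HS}); Phase~2 runs in $O(n\log p)$ time by Lemma~\ref{l:ph2}; Phase~3 runs in $O(n\log p)$ time by Corollary~\ref{cor:ph3}. Summing gives $O(n\log p)$ as claimed.

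For the message bound, Phase~1 uses $O(n\log n)$ basic messages, Phase~2 uses $O(n^2)$ basic messages by Lemma~\ref{l:ph2}, and Phase~3 uses $O\!\left(nm\cdot\frac{\log m}{\log n}\right)$ basic messages by Corollary~\ref{cor:ph3}. Under the hypothesis $m\in O(n^c)$ we have $\log m = O(\log n)$, so Phase~3 contributes $O(mn)$; since we are in the regime $m\geq n$, this dominates the $O(n^2)$ of Phase~2 and the $O(n\log n)$ of Phase~1, yielding an overall upper bound $O(mn)$. Theorem~\ref{msgcompl} supplies the matching lower bound $\Omega(mn)$, so the message complexity is $\Theta(mn)$.

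The main obstacle is not the summation of complexities but the feasibility argument at the very last stage: one must rule out the pathological scenario in which some agent $a_i$ still has a positive residual quota while some unassigned color $c_j$ is invisible to it (for instance because, hypothetically, $Q_{j,i}>0$ had been skipped in an earlier stage when $a_i$ was already full). The clean way to dispatch this is to observe that the intervals in~(\ref{eq:intervals}) partition $[0,+\infty)$, that the final interval captures precisely the zero-weight entries, and that the equality $\sum_i\mathcal{K}_i=m$ forces the bookkeeping of remaining quotas against remaining colors to balance exactly, so the greedy clockwise sweep in Step~2 of the last stage always completes the assignment.
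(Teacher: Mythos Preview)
Your proposal is correct and follows the paper's proof closely: establish feasibility (no color is assigned twice; every color is eventually assigned), then sum the per-phase time and message costs from Lemma~\ref{l:ph2} and Corollary~\ref{cor:ph3}, collapsing $\log m/\log n$ to $O(1)$ under $m\in O(n^c)$ and absorbing the $O(n^2)$ and $O(n\log n)$ terms via $m\ge n$. You in fact go a little beyond the paper by explicitly invoking Theorem~\ref{msgcompl} to justify the $\Omega(mn)$ side of the $\Theta(mn)$ claim (the paper's proof only exhibits the upper bound) and by checking the per-agent quota $|\{j:\pi(j)=i\}|=\mathcal{K}_i$ directly.

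One point to sharpen in your last-stage argument: the pathological scenario you flag is not dispatched by the equality $\sum_i\mathcal{K}_i=m$ alone. The missing ingredient is monotonicity of the quotas: once an agent is full it stays full, so if $a_i$ has residual capacity at stage $\log p'$ it had residual capacity at every earlier stage, and in particular at the unique stage $r$ with $Q_{j,i}\in I_r$ the color $c_j$ would have appeared in $\mathcal{L}_{i,r}$ and been self-assigned (or already taken). This is exactly the paper's ``it is only because $a_i$ has enough colors already'' observation, after which the paper iterates stage by stage down to the zero-weight interval. With that monotonicity step made explicit, your global residual-capacity count and the paper's descending-stage argument coincide.
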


\begin{proof}  
To prove correctness, we show that any assignment of  colors to
agents computed by algorithm {\tt Sync-Balance} satisfies the two 
following conditions: 
\begin{itemize}
\item[$(i)$]  A color $c_j$ cannot be assigned to more than one agent.
\item[$(ii)$] All colors are assigned.
\end{itemize}


\noindent $(i)$ The algorithm can assign a new color $c_j$ to agent $a_i$ only in line 9 of Algorithm~\ref{B3.2}.
This can only happen if $c_j$ has not been already assigned in a previous stage, or in the current stage to an 
agent with smaller label. Since, in the stage, the color assignment is done sequentially (starting from the leader
 and following the ring
clockwise), no color can be assigned to two different agents. Moreover, in lines 15-17 of Algorithm~\ref{B3.2}, 
all agents update the list of colors assigned in the current stage and, hence, in later stages, 
already assigned colors will not be assigned again.  Therefore {\tt Sync-Balance} prevents the
assignment of the same color to two different agents.

\medskip

\noindent $(ii)$ If an available color $c_j$ of weight
$Q_{j,i}\in\lbrack l_r,u_r)$ is not taken by $a_i$ during
stage $r$, it is only because $a_i$ has enough colors already
(line 9). However, this circumstance may not occur at all
agents during the same stage (for this would imply that there were
more than $m$ colors). Thus, either the color is taken by a higher
labeled agent in stage $r$, or is ``left free'' for agents for
which the weight of $c_j$ is less than $l_r$. By iterating the
reasoning, we may conclude that, if not taken before, the color
must be eventually assigned in stage $\lceil\log p\rceil+1$, where
agents are allowed to pick colors for which their weight is zero.

\medskip

As for upper bounds on time and message complexities, by summing up upper bounds for  the three phases, we have 

$$
\begin{array}{ll}
\mbox{ Time complexity:} & \underbrace{O(n)}_\text{Phase 1} + \underbrace{O(n\log p)}_\text{Phase 2} + \underbrace{O(n\log p)}_\text{Phase 3} = O(n\log p),\\
 & \\
\mbox{ Message complexity:} &  \underbrace{O(n\log n)}_\text{Phase 1} + \underbrace{O(n^2)}_\text{Phase 2} 
+ \underbrace{O\left(n m \cdot\frac{\log m }{ \log n}\right)}_\text{Phase 3} = O(nm),
\end{array}
$$

where we used Lemma~\ref{l:ph2}, Corollary~\ref{cor:ph3}, and the facts that $m\geq n$ and that 
$\log m / \log n \in O(1)$, under 
the given hypothesis. 


\end{proof}

\subsection{Asynchronous ring}

In an asynchronous ring such instructions as {\em "wait for i time
  units"} (see Algorithm~\ref{B3.1} and \ref{B3.2}) cannot guarantee a
correct completion of the global algorithm. Here we discuss how to
make simple modifications to {\tt Sync-Balance} in order to get an
algorithm (named {\tt Async-Balance}) that correctly works in the
asynchronous case as well.

The leader election in Phase 1 can be done in $O(n)$ time with a
message complexity of $O(n\log n)$ even on an asynchronous ring of $n$
nodes \cite{HS}.  Therefore, the main differences are in Phase 2 and
Phase 3.
 
In Phase 2 we propose a slightly different strategy that works in only
2 stages, instead of $\log p$. This better time complexity translates,
in general, into an extra cost in terms of message complexity.
Nevertheless, under reasonable hypothesis (namely when $p \in
O(m^{m})$), the message complexity reduces to the same bound as for
the synchronous setting.

Finally, in Phase 3, the main ideas remain the same, but there are no
``silent stages'' and the leader acts differently from the other
agents, as it is the one originating all messages circulating on the
ring.

In the following we highlight the main differences with the synchronous protocol: 

\medskip

{\noindent {\bf Algorithm {\tt Async-Balance}}}

\medskip

\noindent {\bf Phase 1.} Leader election can be accomplished with an $O(n \log n)$ message complexity~\cite{HS}. 

\noindent {\bf Phase 2.} This phase consists of only two stages.  In
the first stage the agents compute $p = \max_i \max_{j} Q_{j,i} $.
Let $p_i = \max_j Q_{j,i}$, {\em i.e.} the maximum number of items of
the same color agent $a_i$ posseses.  The leader originates a message
containing $p_0$. Upon reception of a message $M$ from its preceding
neighbor, agent $a_i$ computes $M = \max \{ M, p_i \}$ and forwards
$M$ to its following neighbor.  The message that gets back to the
leader contains $p$ and it is forwarded once again on the ring to
inform all agents.

\elimina{ At each stage $r$, the leader generates a message $M =
  B_0(r)$ that is sent clockwise on the ring. Agents $a_i$ that
  receives the message, computes a new message $M = M + B_i (r)$ and
  forwards it on the ring again. When the message gets back to the
  leader, variable $A$ is updated as in the synchronous case, and if
  $A = n$ the leader communicates to all agents the end of the
  phase. - non va perche' girano n messagi per alla peggio $\log p$
  stage.  }

Observe that Phase 2 requires no more than 
$O\left(n\cdot \frac{\log p}{\log n} \right)$ 
basic messages, as $O( \log p / \log n) $ basic messages
are needed to send the $p_i$'s and $p$.

\noindent {\bf Phase 3.} Changes in this phase concern both the
execution of {\sc step 1} and {\sc step 2}, that are to be modified in
the following way:

\medskip

\noindent {\sc Step 1}. Each agent $a_i$ computes its list of
assignable colors ${\cal L}_{i,r}$ and sets $Y_i(r) = 1$ if $|{\cal
  L}_{i,r}| > 0$, and $Y_i(r) = 0$ otherwise.  The leader starts the
step by sending, to its following neighbor, a basic {
  boolean} message containing
$Y_0(r)$. Upon reception of a message $M$ from its preceding neighbor,
agent $a_i$ computes $M = M \lor Y_i(r)$ and forwards $M$ to its
following neighbor.  When the leader gets the message back, it
forwards the message again on the ring, and the same is done by all
agents, until the message arrives to $a_{n-1}$. The second time one
agent (leader included) gets the message, it checks its content: if it
is a one, then it knows that it has to proceed to {\sc Step 2};
otherwise, if it contains a zero, it proceeds to the next
stage. 

\medskip

\noindent {\sc Step 2}. The leader starts the step by sending, to its
following neighbor, a (possibly empty) list of self assigned colors,
obtained exactly as in the synchronous case. Then agents act as in the
synchronous protocol, with the exception that they are activated by
the arrival of a message from the preceding neighbor and not by a time
stamp. Agents proceed to the next stage after forwarding the complete
list of colors assigned in the stage.  \elimina{ This message is sent
  right after the last one of {\sc Step 1}. Any other agent waits for
  a message coming from its preceding neighbor, that will surely
  arrive. Then, it self assigns some colors, if it has to, and
  consequently updates the incoming message. The modified message, or
  the original one if no assignment is done, is then forwarded to its
  following neighbor. When the message gets to the leader again, this
  is forwarded once more around the ring, so that all agents are able
  to update the list of assigned colors. The last agent on the ring
  ($a_{n-1}$) stops the message. Agents proceeds to the next stage on
  the time unit successive to message forwarding. }

\medskip

\begin{lemma}
  Let $K_r$ be the number of colors assigned in stage $r$ of Phase 3,
  then stage $r$ can be completed using at most 
   $O\left( n\cdot \frac{K_r\log m}{\log n}\right)$ 
  basic messages.
\end{lemma}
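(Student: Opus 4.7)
The plan is to bound the two steps of a single stage $r$ of Phase 3 of {\tt Async-Balance} separately and observe that Step 2 dominates when $K_r \geq 1$. First I would handle Step 1: the leader initiates a single-bit message that traverses the ring twice---once to accumulate the logical OR of the $Y_i(r)$'s, and once to broadcast the resulting bit so that every agent knows whether to enter Step 2 or skip to the next stage. Each such message fits in one basic message (its payload is a single bit), and $O(n)$ link traversals occur in total, so Step 1 contributes $O(n)$ basic messages.

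Next I would analyze Step 2, which is executed only when at least one $Y_i(r) = 1$, and in that case $K_r \geq 1$ by definition. The leader sends, clockwise, a message containing its own self-assigned colors. As this message traverses the ring, each agent appends the colors it self-assigns, and by the time it returns to the leader it carries all $K_r$ colors assigned during this stage. The leader then forwards the completed list once more around the ring so that every agent can update its record of already-assigned colors. Since each of the $m$ colors can be encoded in $\log m$ bits, each of these two messages has length at most $K_r \log m$ bits. By the paper's cost model, transmitting a message of length $L$ over one link costs $O(\lceil L / \log n \rceil)$ basic messages, so summing over the $O(n)$ link traversals in Step 2 yields $O\!\left(n \cdot \frac{K_r \log m}{\log n}\right)$ basic messages.

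Adding the two contributions and using $m \geq n$ (hence $\log m \geq \log n$), the $O(n)$ cost of Step 1 is absorbed by the Step 2 bound whenever $K_r \geq 1$, yielding the claimed estimate. The only subtlety---more careful bookkeeping than a genuine obstacle---is that the accumulating message in Step 2 has different lengths on different links; uniformly bounding its size on every link by the final length $K_r \log m$ suffices for the asymptotic count, and is valid because message size only grows as agents append their self-assignments.
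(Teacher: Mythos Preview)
Your proposal is correct and follows essentially the same approach as the paper: bound {\sc Step~1} by $O(n)$ (a one-bit message making two traversals of the ring) and {\sc Step~2} by $O\!\left(n\cdot\frac{K_r\log m}{\log n}\right)$ (a list of at most $K_r$ color identifiers, each of $\log m$ bits, making two traversals). Your write-up is in fact more explicit than the paper's, which simply appeals to the analogous synchronous lemma for the {\sc Step~2} bound; your remarks about the growing message length and about absorbing the $O(n)$ term when $K_r\geq 1$ are useful clarifications that the paper leaves implicit.
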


\begin{proof} {\sc step 1} is always performed and a basic message is
  forwarded (almost\footnote{On the second stage, agent $a_{n-1}$
    stops the message.}) twice around the ring. Hence, $O(n)$ basic
  messages are used. When {\sc step 2} is performed, a message
  containing color ID's goes (almost) twice around the
  ring. Analogously to the synchronous case, we can prove that no more
  than 
  $O\left(n\cdot \frac{K_r \log m}{ \log n}\right)$
  messages are needed.
\end{proof}

Analogously to the synchronous case, we can prove the following corollary.

\begin{corollary}
Phase 3 of Algorithm {\tt Async-Balance} can be completed using  $O\left(n m\cdot \frac{\log m }{ \log n}\right)$ basic messages.
\end{corollary}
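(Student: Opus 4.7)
The plan is to mirror the argument used for Corollary~\ref{cor:ph3} in the synchronous case. By the lemma immediately preceding this corollary, each stage $r$ of Phase~3 of {\tt Async-Balance} uses at most $O\bigl(n K_r \log m / \log n\bigr)$ basic messages, where $K_r$ denotes the number of colors assigned during stage~$r$. Summing this per-stage bound over all the stages of Phase~3 and factoring out the $n\log m/\log n$ term gives
\[
\sum_{r} O\!\left(\frac{n\, K_r \log m}{\log n}\right)
\;=\; O\!\left(\frac{n \log m}{\log n}\right) \sum_{r} K_r.
\]
The argument is then closed by invoking the conservation-of-colors identity $\sum_r K_r = m$, since every color is assigned to exactly one agent over the course of Phase~3, yielding the claimed $O(nm \log m / \log n)$ bound.

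For the correctness of this accounting (colors are never re-assigned, and every color is eventually assigned), I would briefly note that the asynchronous modifications to Phase~3 touch only the triggering of steps (a boolean message circulated in {\sc Step~1} replaces the time-based activation) and the stage-end synchronization, but leave the color-selection logic in {\sc Step~2} identical to the synchronous one. Hence the two invariants used in the correctness proof of {\tt Sync-Balance}, namely that no color is assigned twice and that all colors are assigned by the final stage, continue to hold, so the equality $\sum_r K_r = m$ is justified.

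There is no real obstacle: the proof is a one-line summation plus an appeal to the preceding lemma and to correctness, exactly paralleling Corollary~\ref{cor:ph3}. The only subtlety worth a sentence is that the per-stage bound from the preceding lemma is taken to subsume the $O(n)$ {\sc Step~1} overhead inherent in every stage, so that silent stages (those with $K_r = 0$) contribute only factors absorbed by the asymptotic notation, just as in the synchronous analysis.
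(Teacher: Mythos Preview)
Your proposal is correct and mirrors the paper's approach exactly: the paper proves this corollary simply by stating that it follows ``analogously to the synchronous case,'' and your writeup unpacks precisely that analogy via the per-stage lemma and the identity $\sum_r K_r = m$. Your added remarks on correctness and on the {\sc Step~1} overhead go beyond what the paper spells out but are consistent with its intended reading.
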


\elimina{
\begin{algorithm*}
\small
\singlespacing
\caption{{\tt Async-Balance} - Phase 3  {\sc Step 1} (performed by agent $a_i$)}
\label{AB3.1}
\begin{algorithmic}[1]

\Statex .... \Comment{As in {\tt Sync-Balance}} 
\For{$r=0$ {\bf to} $\lceil\log p' \rceil$}   
\Statex .... \Comment{As in {\tt Sync-Balance}}
\Statex {\sc Step 1} 
 \If{$i=0$}  \Comment{Agent $a_i$ is the leader}
	\If{${\cal L}_{0,r} = \emptyset$}
		\State Send message ${\cal M} = \{ 0 \}$ to following neighbor
		\Else 
		\State Send message ${\cal M} = \{ 1 \}$ to following neighbor
		\EndIf
	\State \cal{Decide}{}	
\EndIf
\If{$i>0$}  \Comment{Agent $a_i$ is not the leader}
	\State Wait for message ${\cal M}$ from following neighbor
	\If{${\cal L}_{i,r} = \emptyset$}
		\State Forward message ${\cal M} $ to following neighbor
		\Else 
		\State Send message ${\cal M} = \{ 1 \}$ to following neighbor
		\EndIf
	\State \cal{Decide}{}	
\EndIf
\EndFor

\Statex

\Procedure{Decide}{}  \Comment{Makes agents decide if to go to {\sc Step 2} or next stage}
\State Wait for message ${\cal M}$ from following neighbor
	\If{$i \not= n-1$}
	\State Forward ${\cal M}$ to following neighbor
	\EndIf
	\If{${\cal M} = \{ 1 \}$}
		\State \cal{Step 2}{}  
	\EndIf   \Comment{otherwise,  goes to next stage}
\EndProcedure

\end{algorithmic}
\end{algorithm*}

\elimina{
\begin{algorithm*}
\small
\singlespacing
\caption{{\tt Async-Balance} - Phase 3  {\sc Step 2} (performed by agent $a_i$)}
\label{AB3.1}
\begin{algorithmic}[1]
\Statex Forse non serve, visto che e' molto piu' simile all'altro e facile?
\end{algorithmic}
\end{algorithm*}
}
}

\begin{theorem}
  Assuming $m \in O(n^c)$, for some constant $c$, Algorithm {\tt
    Async-Balance} finds a feasible solution to the balanced color
  assignment problem, on asynchronous rings, within time $O(n\log p)$
  using $O\left(n\cdot \frac{\log p}{ \log n} + nm \right)$ basic messages.
\end{theorem}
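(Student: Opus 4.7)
The plan is to establish the theorem by the same template used for the synchronous case: first argue correctness, i.e., that the output is a Balanced Coloring, and then sum up the time and message costs of the three phases, using the asynchronous versions of the bounds already proved.

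For correctness, I would observe that the two invariants $(i)$ ``no color is assigned twice'' and $(ii)$ ``every color is eventually assigned'' transfer verbatim from the synchronous analysis, because the combinatorial rule used to self-assign colors in stage $r$ has not changed: each agent greedily picks colors whose weight lies in $I_r$ and that are not already in the running set ${\cal C}_{r-1}$, subject to the quota ${\cal K}_i$. The only thing to check is that the asynchronous protocol still serializes the agents around the ring within a stage. In {\sc Step 1} the boolean vote is forwarded twice around the ring, so when an agent decides whether to enter {\sc Step 2} it is reading a value that has accumulated contributions from all $n$ agents; hence either every agent enters {\sc Step 2} or none does. In {\sc Step 2}, the list ${\cal M}$ of newly assigned colors travels sequentially from the leader around the ring, so each agent sees all assignments made earlier in the same stage before deciding its own, and a second traversal propagates the final list to everyone. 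Thus conditions $(i)$ and $(ii)$ follow exactly as in the proof of the synchronous theorem.

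For time complexity, Phase 1 runs in $O(n)$ time on asynchronous rings by \cite{HS}. Phase 2 now consists of two ring traversals carrying the running maximum, so it costs $O(n)$ time units. Phase 3 has $\log p' + 1 = O(\log p)$ stages, each involving a constant number of ring traversals in {\sc Step 1} and {\sc Step 2}, for a total of $O(n \log p)$ time units. Summing up yields the claimed $O(n \log p)$ bound.

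For message complexity, Phase 1 costs $O(n \log n)$ basic messages. In Phase 2 the forwarded value is a single integer bounded by $p$, which requires $O(\log p / \log n)$ basic messages per hop and hence $O\!\left(n \cdot \frac{\log p}{\log n}\right)$ basic messages in total. Phase 3 is bounded by the preceding corollary at $O\!\left(nm \cdot \frac{\log m}{\log n}\right)$ basic messages; under the hypothesis $m \in O(n^c)$ we have $\log m/\log n = O(1)$, so this contribution is $O(nm)$. Adding the three contributions and absorbing the $O(n \log n)$ term from Phase 1 into $O(nm)$ (since $m \geq n$) gives the overall bound $O\!\left(n \cdot \frac{\log p}{\log n} + nm\right)$. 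The main subtlety, and the only point that really needs care beyond copying the synchronous argument, is verifying that the asynchronous {\sc Step 1} vote correctly synchronizes all agents on the same decision (enter {\sc Step 2} or skip to stage $r+1$) despite the absence of a global clock; once this is in place, the rest of the argument is bookkeeping.
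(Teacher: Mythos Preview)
Your proposal is correct and follows essentially the same approach as the paper: the paper's proof simply declares correctness ``analogous to the synchronous case'' and then sums the per-phase time and message bounds exactly as you do, invoking $m\in O(n^c)$ to collapse $\log m/\log n$ to $O(1)$. Your extra paragraph justifying that the two-pass boolean vote in {\sc Step~1} synchronizes all agents on the same decision is a welcome elaboration that the paper omits; note only the minor slip that in the asynchronous version Phase~2 computes $p$ exactly (not $p'$), so Phase~3 has $\log p + 1$ stages rather than $\log p' + 1$, but this does not affect the $O(\log p)$ count.
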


\begin{proof}
The correctness proof is analogous to the synchronous case. 

The time complexity is asymptotically equivalent to the synchronous
case. Indeed, as already mentioned, the leader election in Phase 1 can
be completed in $O(n)$ time, Phase 2 requires 2 circles around the
ring and, finally, Phase 3 includes $O(\log p)$ stages, each of them
requiring 2 circles around the ring.

For what concerns message complexity, summing up upper bounds for
single phases, we get

$$ O( n \log n) + O\left( n\cdot \frac{\log p}{ \log n} \right) + 
O\left(n m\cdot \frac{\log m }{ \log n}\right) = O\left(n\cdot \frac{\log p}{ \log n} + nm \right), $$

as $m \in O(n^c)$.

\end{proof}

When we also have that $O(\log p) = O(m \log m)$, the algorithm
exhibits the same {\it optimal} message complexity as in the
synchronous setting. Namely, we can state the following result.

\begin{corollary}
  \label{c19}
  If $m \in O(n^c)$, for some constant $c$, and $p \in O(m^{m})$, then
  Algorithm {\tt Async-Balance} finds a feasible solution to the
  balanced color assignment problem, on asynchronous rings, using
  $\Theta (nm)$ messages.

\end{corollary}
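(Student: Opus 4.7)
The plan is to combine the upper bound of the preceding theorem with the lower bound established earlier in Theorem~\ref{msgcompl}, and then simplify the two terms of the upper bound under the stated hypotheses on $m$ and $p$. Since the lower bound $\Omega(mn)$ applies to any distributed algorithm for the problem on the ring (both synchronous and asynchronous), it is enough to show that the upper bound $O\left(n\cdot \frac{\log p}{\log n} + nm\right)$ collapses to $O(nm)$ under the hypotheses $m \in O(n^c)$ and $p \in O(m^m)$.

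First, I would argue that the second summand $nm$ is already of the desired order, so the whole issue reduces to bounding the first summand $n \cdot \frac{\log p}{\log n}$. From $p \in O(m^m)$ we obtain $\log p = O(m \log m)$. Substituting this estimate gives
\[
n \cdot \frac{\log p}{\log n} \;=\; O\!\left(n \cdot \frac{m \log m}{\log n}\right).
\]
Now the hypothesis $m \in O(n^c)$ implies $\log m = O(\log n)$, and hence $\frac{\log m}{\log n} = O(1)$. Plugging this back yields $n \cdot \frac{\log p}{\log n} = O(nm)$, so the overall upper bound of the theorem becomes $O(nm) + O(nm) = O(nm)$.

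Finally, Theorem~\ref{msgcompl} provides the matching lower bound $\Omega(nm)$ for any distributed algorithm solving the problem on a ring, and this bound applies to the asynchronous setting as well (the proof of Theorem~\ref{msgcompl} is model-independent, since it uses Shannon-entropy/indistinguishability arguments on instances from $\mathcal{I}$). Combining the two bounds gives the claimed $\Theta(nm)$ message complexity for {\tt Async-Balance}.

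There is no real obstacle here: the argument is essentially a one-line asymptotic manipulation plus an appeal to the previously proved upper and lower bounds. The only thing worth double-checking is the chain $\log p = O(m\log m)$ and $\log m = O(\log n)$, which must be combined in the correct order so that the constant $c$ from $m\in O(n^c)$ is absorbed into the hidden constants of the $O(nm)$ estimate.
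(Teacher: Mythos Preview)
Your proposal is correct and follows exactly the route the paper intends: the paper does not give a formal proof of the corollary but only the remark preceding it, namely that when $\log p = O(m\log m)$ the upper bound of the previous theorem collapses to $O(nm)$, which together with the $\Omega(nm)$ lower bound of Theorem~\ref{msgcompl} (stated there for both synchronous and asynchronous rings) yields $\Theta(nm)$. Your write-up simply spells out the two substitutions $\log p = O(m\log m)$ and $\log m/\log n = O(1)$ that the paper leaves implicit.
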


\elimina{
{ \tt sezione da controllare }

\subsubsection{About using the asynchronous protocol in the synchronous case}

Observe that the asynchronous protocol obviously works also in the
synchronous case, but it involves the expedition of a larger number of
messages. Even if this does not show in the worst case asymptotic
message complexity, this second protocol always uses $2n-1$ messages
in {\sc Step 1}, even if {\sc Step 2} is not to be performed. On the
contrary, the protocol presented for the synchronous case uses only
$n$ messages in the case agents have to proceed to {\sc Step 2} and no
message at all in the opposite case. As for {\sc Step 2}, let $k$ be
the smallest agent label such that ${\cal L}_{i,r} \setminus {\cal M}
\not= \emptyset$, then the protocol for the asynchronous case requires
extra $0\leq k \leq n-1$ messages with respect to the synchronous
case.

On the other side, the asynchronous protocol is faster on a
synchronous ring, as steps and stages might be highly
overlapped. According to the specific setting in which the algorithm
is applied, and depending on which resource, between time and
messages, is the most expensive, one can decide to use one protocol or
the other (obviously only in the case of a synchronous ring).

\begin{lemma}
{\bf Phase 3} of {\tt Async-Balance} applied to a synchronous ring can be completed within $2(n +1)\log p' $ time units. 
\end{lemma}

\begin{proof}
For any stage $r = 0, \ldots, \log p' $, we have that
{\sc Step 1} for the leader $a_0$ ends after $(n+1)$ time units, and, when performed, {\sc Step 2} ends after other  $(n+1)$ time units. The leader starts the next stage on the successive time unit (making two stages overlap). 
\end{proof}

Using the synchronous protocol for leader election, we have: 

\begin{theorem}
Assuming $m \in O(n^c)$, for some constant $c$, Algorithm {\tt Async-Balance} finds a feasible solution to the balanced
color assignment  problem, on synchronous rings, within $ O(n\log p)$ time units. 
\end{theorem}

As we said, even if it is not evident from worst case asymptotic computational cost, there is a multiplicative factor of 2 between time and message complexity in using the asynchronous protocol for a synchronous ring: the protocol is twice as fast, but uses a doubled number of messages.  {\tt controllare affermazione}


\elimina{
We now formally describe the algorithm performed by any agent $a_i$ at
any stage $r$. Message ${\cal M}$ is received from the right neighbor
and contains a list of the colors already assigned to agents closer to
the leader at the present stage. ${\cal M}'$ is the list of candidate
colors to be assigned to agent $a_i$: colors whose weights fall in the
interval relative to the stage $r$, that have not been already
assigned to agents closer to the leader in stage $r$ or to any other
agent in previous stages ({\em i.e.}, ${\cal M}' \cap ({\cal M} \cup
{\cal C}_{r-1}) = \emptyset$). Finally, ${\cal M}^* \subseteq {\cal
M}'$ contains the maximum number of colors still assignable to the
agent ({\em i.e.} $|{\cal M}^*| \leq
\frac{m}{n}-|\xi_i|$ ). \\

Any agent $a_i$ performs the following protocol.
{
\noindent 
\begin{itemize}
\item[1.]  ${\cal C}_{-1}:=\xi_i:=\emptyset;$
\item[] {\bf For} $r:=0$ {\bf to} $\lceil\log p\rceil$ {\bf do}
   \begin{itemize}
   \item[2.]  Wait for message ${\cal M}$ from the right;   \,\,\,\,\,\,\,\,\,/*
       { the leader starts the stage by setting} ${\cal M}:=\emptyset;$\,\,\,*/
   \item[3.] ${\cal M}':=\{c_j|\,\,\, c_j\notin{\cal M}\cup{\cal C}_{r-1}\mbox{ and }
              Q_{j,i}\in\lbrack l_r,u_r) \};$\,\,
    /* {if $r=0$, \,\,\,$Q_{j,i}\in\lbrack l_r,u_r\rbrack$}\,\,\,*/
    \item[4.] Let ${\cal M}^*$ be the set with the
 $\min\lbrace\frac{m}{n}-|\xi_i|,|{\cal M}'|\rbrace$
              colors of highest weight in ${\cal M}'$;
    \item[5.] $\xi_i:=\xi_i \cup {\cal M}^*$;  \,\,\,\,\,\,\,\,\,/*
              {$a_i$ updates its own set of assigned colors} */
    \item[6.] Send message ${\cal M} \cup {\cal M}^*$ to the left;
    \item[7.] Wait for message ${\cal M}_r$ from the right;
\,\,\,\,\,\,\,\,\,
    \item[8.] ${\cal C}_r:={\cal C}_{r-1} \cup  {\cal M}_r$;
 \,\,\,\,\,\,\,\,\,/*
               {Update ${\cal C}_r$ with the colors assigned in
 stage} $r$ */
    \item[9.] Send message ${\cal M}_r$ to the left;
    \item[10.] If ${\cal C}_r = {\cal C}$ then {\bf stop};
    \end{itemize}
 \item[] {\bf nextfor}
 \end{itemize}
}
}

}

\section{Approximation Factor of Algorithm {\tt Balance} }
\label{Approx}

The main result of this section is that the cost of the solution (as
defined in Definition~\ref{def:cost}) computed by the algorithms
presented in the previous sections is at most a small {\it constant}
factor larger than the cost of the optimal solution.  Namely, we will
show that it is at most three times the optimal solution and that the
analysis is tight.  Moreover, we will show how to modify the algorithm
to get a $2$-approximation ratio at the expenses of a little increase
of message complexity, and, for the synchronous case only, how to get
a $(2+ \epsilon)$-approximation ratio (for every $0<\epsilon <1$) at
the expenses of an increase in time complexity.

Since, under the { same assumptions} of Corollary \ref{c19}, the cost of the
solution is the same both in the synchronous and asynchronous versions
(the assignment of colors is exactly the same in both cases), in this
section we will address both {\tt Sync-Balance} and {\tt
  Async-Balance} with the generic name {\tt Balance}. In the following
some results are expressed in terms of the value $p'$ (respectively,
$p$) computed by the agents in the synchronous (resp. asynchronous)
case during Phase 2 of the algorithm. As these results hold for both
$p'$ and $p$, to avoid repeating the distinction between $p'$ and $p$
over and over again, we will indicate with ${\hat p}$ both values $p'$
and $p$.

We begin with the following lemma:

\begin{lemma}
\label{stessostage}
Let color $c_j$ be assigned to agent $a_i$ in stage $r$ (of Phase 3)
by algorithm {\tt Balance}.  Let $a_{k}$ be a different agent such
that $Q_{j,k}\in\lbrack l_r,u_r)$.  Then $Q_{j,i} \leq 2 \cdot
Q_{j,k}.$
\end{lemma}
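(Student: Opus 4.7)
The plan is to prove the bound by a direct case analysis based on which interval $I_r$ the stage $r$ refers to, exploiting only the definition of the list $\mathcal{L}_{i,r}$ and the explicit form of $I_r$ given in equation (\ref{eq:intervals}).

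First I would observe that in order for {\tt Balance} to assign $c_j$ to $a_i$ at stage $r$, the color $c_j$ must belong to $\mathcal{L}_{i,r}$, which by line~4 of Algorithm~\ref{B3.1} forces $Q_{j,i}\in[l_r,u_r)=I_r$. So both $Q_{j,i}$ and $Q_{j,k}$ lie in the same half-open interval $I_r$, and the claim reduces to showing that the ratio of two numbers in $I_r$ is at most $2$.

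For the generic stage $0<r<\log\hat p$ this is immediate: $Q_{j,i}<\hat p/2^{r}$ and $Q_{j,k}\geq \hat p/2^{r+1}$, so $Q_{j,i}/Q_{j,k}<2$, giving $Q_{j,i}\leq 2Q_{j,k}$. The case $r=\log\hat p$ is the trivial one, since $I_{\log\hat p}=[0,1)$ and integrality forces $Q_{j,i}=Q_{j,k}=0$. The step I expect to require a little care is $r=0$, where $I_0=[\hat p/2,+\infty)$ is unbounded from above; here I need to use the global bound on the weights. In the asynchronous case $\hat p=p=\max_{i,j}Q_{j,i}$ and in the synchronous case $\hat p=p'$ satisfies $p\leq p'\leq 2p$ by Lemma~\ref{l:ph2}; either way $Q_{j,i}\leq p\leq \hat p$, so combined with $Q_{j,k}\geq \hat p/2$ we obtain $Q_{j,i}/Q_{j,k}\leq 2$ as required.

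Putting the three cases together yields $Q_{j,i}\leq 2Q_{j,k}$ in every stage, which is the claim. The main (and only) subtlety is making sure that the approximation $p'$ of $p$ computed in Phase~2 is good enough to keep the ratio at~$2$ in the unbounded interval $I_0$; this is exactly the reason why Phase~2 is required to guarantee $p'\leq 2p$ rather than merely $p'\geq p$.
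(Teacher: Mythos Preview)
Your proof is correct and follows essentially the same approach as the paper: observe that assignment at stage $r$ forces $Q_{j,i}\in I_r$, then use the endpoints of $I_r$ to bound the ratio. The paper's proof handles the last stage separately (both weights are zero) and otherwise writes the single inequality $\hat p/2^{r+1}\le Q_{j,i},Q_{j,k}<\hat p/2^{r}$, implicitly folding the $r=0$ case into the generic one; you are more explicit in treating $I_0=[\hat p/2,+\infty)$ and invoking $Q_{j,i}\le p\le \hat p$ from Lemma~\ref{l:ph2} to recover the upper bound, which is a welcome clarification rather than a different argument.
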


\begin{proof} 
If $r=\lceil \log {\hat p} \rceil + 1$ ({\em i.e.}, is the last stage), then
it must be  $Q_{j,i} = Q_{j,k}= 0$, and we
are done. Otherwise,  as  $c_j$ is assigned to agent $a_i$ in stage
$r$ then it must be $Q_{j,i}\in\lbrack l_r,u_r)$ and   the thesis easily follows from

\[
  {\frac {{\hat p}}{2^{r+1}}}  \leq
  Q_{j,i} , Q_{j,k} <  {\frac{{\hat p}}{2^r}}.
\]
\end{proof}

Let $\piB: \{ 1,\ldots, m\} \rightarrow \{1,\ldots, n\}$ be the assignment of colors to agents determined by
algorithm {\tt Balance}, and let $\piO: \{ 1,\ldots, m\} \rightarrow \{1,\ldots, n\}$ be an optimal
assignment. Define a partition of the set of colors based on their
indices, as follows:

\begin{itemize}
\item ${\cal C'} = \{ j \mid \piB (j) = \piO (j) \}$; {\em i.e.},
color indices for which the assignment made by algorithm
{\tt Balance} coincides with (that of) the optimal solution.
\item ${\cal C''} = \{0,\ldots ,m-1 \} \setminus {\cal C'}$; {\em i.e.},
colors indices for which the assignment made by algorithm
{\tt Balance} is different from the one of the optimal solution.
\end{itemize}

\begin{lemma}
  \label{massimo} Assume $\mathcal{C}''$ is not empty (for otherwise
  the assignment computed by \texttt{Balance} would be optimal) and
  let $j \in {\cal C''}$. Let $k\neq j$ be any other color index
  in ${\cal C''}$ such that $\piB (k) = \piO(j)$.  Then $$ Q_{j,\piO
    (j)} \leq \max \{ 2\cdot Q_{j,\piB (j)} , Q_{k,\piB (k)} \}.$$
\end{lemma}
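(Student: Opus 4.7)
My plan is a case analysis on how $Q_{j,\piO(j)}$ compares with the weight of $j$ at the agent that actually received it. Let $r_j$ denote the stage of Phase~3 in which \texttt{Balance} assigns $j$ to $\piB(j)$, so that $Q_{j,\piB(j)} \in I_{r_j} = [l_{r_j},u_{r_j})$. I would first dispose of the easy case: if $Q_{j,\piO(j)} \leq 2\,Q_{j,\piB(j)}$ the first term in the max already closes the argument. So I may assume $Q_{j,\piO(j)} > 2\,Q_{j,\piB(j)} \geq 2 l_{r_j} = u_{r_j}$, which forces $Q_{j,\piO(j)}$ to lie in some interval $I_{r'}$ with $r' < r_j$.

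Next I would focus on what happens at the agent $\piO(j)=\piB(k)$ during stage $r'$. Because $j$ is only assigned later (in stage $r_j$), at the start of stage $r'$ color $j$ is still available; together with $Q_{j,\piO(j)} \in I_{r'}$, this places $j$ in the candidate list ${\cal L}_{\piO(j),r'}$ processed in Step~2. Yet $\piO(j)$ does not pick $j$, since $\piB(j)\neq\piO(j)$. Given the greedy rule of Step~2 (take as many colors from ${\cal L}_{\piO(j),r'}$ as the remaining quota permits, breaking ties in favour of the heaviest weights, as in the preliminary version), the only way this can occur is that by the end of stage $r'$ agent $\piO(j)$ has already saturated its quota ${\cal K}_{\piO(j)}$ with colors whose weights at $\piO(j)$ are all at least $Q_{j,\piO(j)}$.

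Finally I would apply this observation to the specific $k$ in the statement. Since $\piB(k)=\piO(j)$, color $k$ was assigned to $\piO(j)$ in some stage $r_k \leq r'$. If $r_k < r'$ then $Q_{k,\piB(k)} \geq l_{r_k} \geq u_{r'} > Q_{j,\piO(j)}$; if $r_k = r'$ then the tiebreaking rule gives $Q_{k,\piB(k)} = Q_{k,\piO(j)} \geq Q_{j,\piO(j)}$. In either subcase $Q_{k,\piB(k)} \geq Q_{j,\piO(j)}$, which is the second term of the stated maximum.

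The main obstacle, in my view, is the correct interpretation of the phrase \emph{``assign the maximum number of colors among those in ${\cal L}_{i,r}$''} when the candidate list is strictly larger than the remaining quota. To pin the bound down to $Q_{k,\piB(k)}$ rather than the weaker $2\,Q_{k,\piB(k)}$, one must invoke the heaviest-first tiebreaking within a single interval $I_{r'}$; without it, colors picked in stage $r'$ are only guaranteed to have weight $\geq l_{r'} = u_{r'}/2$, so one would lose another factor of two. Everything else is just bookkeeping on the geometric intervals $I_r$ and the temporal order in which \texttt{Balance} commits colors.
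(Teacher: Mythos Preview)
Your argument is correct and mirrors the paper's, with the case split reversed: the paper assumes $Q_{j,\piO(j)} > Q_{k,\piB(k)}$ and derives $Q_{j,\piO(j)} \le 2\,Q_{j,\piB(j)}$ (arguing that $\piO(j)$ cannot have exhausted its quota when it processed $c_j$, so $c_j$ must already have been taken by $\piB(j)$ in the same or an earlier stage), whereas you assume $Q_{j,\piO(j)} > 2\,Q_{j,\piB(j)}$ and derive $Q_{k,\piB(k)} \ge Q_{j,\piO(j)}$ (arguing that $\piO(j)$ \emph{must} have exhausted its quota by stage $r'$). Both directions rest on the same mechanism---quota saturation at $\piO(j)$ together with the stage ordering---so this is a reorganisation rather than a different idea. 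Your remark about the heaviest-first tiebreak is well taken: the paper's Case~1 (``already reached its maximum number of colors before stage $r$'') does not spell out the subcase where $\piO(j)$ saturates \emph{during} the very stage in which $c_j$ is processed, and pinning the bound down to $Q_{k,\piB(k)}$ rather than $2\,Q_{k,\piB(k)}$ in that subcase indeed requires the rule from the preliminary description that, within a stage, an agent keeps the heaviest available colours.
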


\begin{proof} { First observe that, 
\elimina{if $j \in {\cal C''}$, then by a
  simple cardinality argument, there must exist at least a distinct
  index $k\in {\cal C''}$ such that $\piB (k) = \piO(j)$.  Moreover,}
  as $j,k \in {\cal C''}$ and $\piB (k) = \piO(j) \not=
  \piB(j)$, we have that $\piB(j) \not= \piB (k)$.}

If $ Q_{j,\piO (j)} \leq Q_{k,\piB (k)} $ we are clearly done. 
Suppose now that $Q_{j,\piO (j)} > Q_{k,\piB (k)}$, then we can prove that 
$Q_{j,\piO (j)} \leq  2\cdot Q_{j,\piB (j)}$. 

The fact that $j \in {\cal C''}$ means that {\tt Balance} assigned
color $c_j$ to a different agent compared to the assignment of the
optimal solution. Let $r$ be the stage of {\tt Balance} execution in
which agent $\piO (j)$ processed color $c_j$ ({\em i.e.}, $Q_{j,\piO
  (j)} \in I_r$) and could not self assign $c_j$, then (in principle)
one of the following conditions was true at stage $r$:

\begin{enumerate}
\item $\piO (j)$ already reached its maximum number of colors before
  stage $r$. \\
  However, this is impossible. It is in fact a contradiction that
  $\piO (j)$ gets color $c_k$ (recall that $\piO (j)= \piB (k)$) but
  does not get color $c_j$ under \texttt{Balance}, since we are
  assuming $Q_{j,\piO (j)} > Q_{k,\piO (j)}$, which means that the
  assignment of $c_k$ cannot be done earlier than $c_j$'s assignment.
\item Color $c_j$ has already been assigned to $\piB (j)$. 
 This might happen because
\begin{enumerate}
\item $c_j$ has been assigned to $\piB (j)$ in a previous stage. \\
This implies that $\piB (j)$ has a larger number of items of color
$c_j$ with respect to $\piO (j)$, {\em i.e.}, that $Q_{j,\piO (j)}
\leq  Q_{j,\piB (j)} \leq  2\cdot Q_{j,\piB (j)}$.
\item $c_j$ has been assigned to $\piB (j)$ in the same stage, because
  it has a smaller label on the ring.\\
By Lemma~\ref{stessostage} we then have that $Q_{j,\piO (j)} \leq  2\cdot Q_{j,\piB (j)}$.
\end{enumerate} 
\end{enumerate}
 
\end{proof}

\begin{theorem}
{\tt Balance} is a $3$-approximation algorithm for the Distributed Balanced  Color
Assignment Problem.
\end{theorem}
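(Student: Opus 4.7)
The plan is to bound $\costB - \costO$ by concentrating on the set $\mathcal{C}''$ of colors on which $\piB$ and $\piO$ disagree. First I would rewrite each cost as $Cost(\pi) = T - \sum_j Q_{j,\pi(j)}$, where $T = \sum_{j,i} Q_{j,i}$ is the total number of items. Since the contributions of colors in $\mathcal{C}'$ cancel, this immediately gives
\[
\costB - \costO \;=\; \sum_{j\in\mathcal{C}''} \bigl(Q_{j,\piO(j)} - Q_{j,\piB(j)}\bigr).
\]

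The next step is to build a bijection $\phi:\mathcal{C}''\to\mathcal{C}''$ such that $\piB(\phi(j)) = \piO(j)$ for every $j\in\mathcal{C}''$. The intuition is that, at each agent $a$, the colors that $\piO$ sends to $a$ but $\piB$ does not can be paired with the colors that $\piB$ sends to $a$ but $\piO$ does not; the balancedness of both colorings is what makes a Hall-type matching on the bipartite graph ``$j\sim k$ iff $\piO(j)=\piB(k)$'' go through. I expect this pairing to be the main obstacle: while $\piB$ and $\piO$ agree on \emph{how many} agents receive $\lfloor m/n\rfloor$ versus $\lceil m/n\rceil$ colors, they need not agree on \emph{which} agents get which count, so a careful counting (possibly preceded by choosing an optimum whose count-profile matches $\piB$'s, and arguing that this is without loss) is required to put the matching in place. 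With $\phi$ available, Lemma~\ref{massimo} applied to each $j\in\mathcal{C}''$ with $k=\phi(j)$ yields $Q_{j,\piO(j)} \leq 2\,Q_{j,\piB(j)} + Q_{\phi(j),\piB(\phi(j))}$, using $\max\{A,B\}\leq A+B$. Summing over $j\in\mathcal{C}''$ and using that $\phi$ permutes $\mathcal{C}''$, the last term reindexes to $\sum_{j\in\mathcal{C}''} Q_{j,\piB(j)}$, so
\[
\sum_{j\in\mathcal{C}''} Q_{j,\piO(j)} \;\leq\; 3\sum_{j\in\mathcal{C}''} Q_{j,\piB(j)}.
\]

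To conclude, I would use the elementary observation that, for every $j\in\mathcal{C}''$, the quantity $Q_{j,\piB(j)}$ appears as one of the addends of the inner sum defining $\costO$ at color $j$, because $\piB(j)\neq\piO(j)$. Therefore $\sum_{j\in\mathcal{C}''} Q_{j,\piB(j)} \leq \costO$. Plugging this into the two displays above gives $\costB - \costO \leq 2\,\costO$, i.e., $\costB \leq 3\,\costO$. Finally, I would exhibit a small instance in which both the inequality of Lemma~\ref{massimo} and the bound $\sum_{j\in\mathcal{C}''} Q_{j,\piB(j)}\leq\costO$ are simultaneously tight, to argue that the factor $3$ cannot be improved by this line of analysis, which matches the tightness claim stated at the beginning of the section.
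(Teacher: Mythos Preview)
Your approach is essentially the paper's: the bijection $\phi$ you build is exactly the permutation whose cycle decomposition gives the paper's sets $\mathcal{C}_i$, and your use of Lemma~\ref{massimo} together with $\max\{A,B\}\le A+B$ and the reindexing via $\phi$ is the same computation the paper carries out cycle by cycle. Two minor differences are worth noting. First, the obstacle you correctly flag---that $\phi$ exists only if $|\piB^{-1}(a)|=|\piO^{-1}(a)|$ for every agent $a$---is handled in the paper not by matching count-profiles (your suggested fix can fail: the unique optimum may assign $\lceil m/n\rceil$ colors to an agent to which {\tt Balance} assigns $\lfloor m/n\rfloor$), but by padding with $n\lceil m/n\rceil-m$ dummy colors of weight zero so that $n\mid m$; this leaves both costs and the algorithm's choices on real colors unchanged, and then every balanced coloring has the same per-agent counts. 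Second, your closing step via $\sum_{j\in\mathcal{C}''} Q_{j,\piB(j)}\le\costO$ is correct but not needed: once $\sum_{j\in\mathcal{C}''} Q_{j,\piO(j)}\le 3\sum_{j\in\mathcal{C}''} Q_{j,\piB(j)}$ is established, adding the common nonnegative part $\sum_{j\in\mathcal{C}'}\sum_{i\ne\piB(j)}Q_{j,i}+\sum_{j\in\mathcal{C}''}\sum_{i\ne\piO(j),\,i\ne\piB(j)}Q_{j,i}$ to both sides already yields $\costB\le 3\,\costO$ directly.
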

\label{3approx}

\begin{proof} Let $\costB$ and $\costO$ be the cost of the solutions
given by algorithm {\tt Balance} and
$OPT$, respectively. We can express these costs in the following way
(where, for simplicity, we omit index $i$'s range, that is always
$[0,n-1]$):

\begin{eqnarray*}
\costB &=& \sum_{j=0}^{m-1} \sum_{i\not= \piB (j)} Q_{j,i} \\
&=&\sum_{j\in {\cal C'} }
\sum_{ i\not= \piB (j)} Q_{j,i} +
\sum_{j\in {\cal C''}} \sum_{i\not= \piB (j)} Q_{j,i} \\ &=&
\sum_{j\in {\cal C'} } \sum_{ i\not= \piB (j)} Q_{j,i} +
\sum_{j\in {\cal C''}} \left( Q_{j,\piO (j)} + \sum_{i  \not=
\piO (j)  , \atop i\not=\piB (j)}   Q_{j,i} \right).
\end{eqnarray*}
Analogously,
\begin{eqnarray*}
\costO &=&
\sum_{j\in {\cal C'} } \sum_{ i\not= \piO (j)} Q_{j,i} +
\sum_{j\in {\cal C''}} \left(Q_{j,\piB (j)} + \sum_{i  \not= \piO
(j)  , \atop i\not=\piB (j)}   Q_{j,i} \right)
\end{eqnarray*}
By definition, $\piB (j)= \piO (j)$, for $j\in {\cal C'}$, and
thus $\sum\limits_{j\in {\cal
C'} } \sum\limits_{ i\not= \piB (j)} Q_{j,i} =
\sum\limits_{j\in {\cal C'} } \sum\limits_{ i\not= \piO (j)}
Q_{j,i}$, {\em i.e.}, the cost associated with color $c_j \in
{\cal C'}$ is exactly the same for {\tt Balance} and $OPT$. Notice also
that the term $\sum\limits_{j\in {\cal C''}} \sum\limits_{i  \not=
\piO (j) \atop i\not=     \piB (j)}   Q_{j,i} $ appears in both
cost expressions. Hence, to prove that  $\costB \leq 3 \cdot  \costO$,
it is sufficient to show that 

\begin{equation}\label{eq:diff}
  \sum\limits_{j\in {\cal C''}} Q_{j,\piO (j)} \leq 3 \sum\limits_{j\in {\cal C''}} Q_{j,\piB  (j)}.
\end{equation}

{

We can assume without loss of generality that $m$ is a multiple of $n$.
Indeed, if otherwise $n$ does not divide $m$, we can add $r$ dummy colors (for $r = m-\lfloor m/n \rfloor$), 
i.e. such that $Q_{j,i} = 0$ for all agents $i$ and dummy color $j$.
Since in our algorithm the agents consider the weights in decreasing order, the dummy colors will be processed
at the end and therefore they have no effect on the assignment of the other colors. Moreover, as their
weights are zero, they do not cause any change in the cost of the solution.

To prove (\ref{eq:diff}), we build a partition of the set ${\cal C''}$ according to the following procedure.  
We start from any $j_1$ in ${\cal C''}$ and find another index $j_2$
such that $\piB(j_2) = \piO(j')$ for some $j'\in {\cal C''}\setminus \{j_2\}$. 
Note that, since $m$ is a multiple of $n$, every agent must have $m/n$ colors and therefore such an index $j_2$
must exist.
If $j'=j_1$ the procedure ends, 
otherwise we have found another index $j_3 = j'$ such that $\piB(j_3) = \piO(j'')$.
Again, if $j''=j_1$ the procedure ends, otherwise
we repeat until, for some { $t\geq 2$}, we eventually get 
{ $\piB(j_t) = \piO(j_1)$ }. 
We then set 

$$ {\cal C}_1 = \left\lbrace (j_1,j_2), (j_2,j_3), \dots,{(j_{t-1},j_t)} \right\rbrace.$$
}
 
If during this procedure we considered all indices in ${\cal C''}$ we stop, otherwise, we pick another index
not appearing  in $ {\cal C}_1$ and repeat the same procedure to define a
second set ${\cal C}_2$, and so on until each index of ${\cal C''}$
appears in one ${\cal C}_i$. Observe that each ${\cal C}_i$ contains
at least two pairs of indices and that each index $j \in {\cal C''}$
appears in exactly two pairs of exactly one ${\cal C}_i$. 



Then, using  Lemma~\ref{massimo}, we get

\begin{eqnarray*}
  \sum\limits_{j\in {\cal C''}} Q_{j,\piO (j)} &=& \sum\limits_{{\cal
      C}_i} \sum\limits_{(j,j') \in {\cal C}_i} Q_{j,\piO (j)}  \\
  &\leq& \sum\limits_{{\cal C}_i} \sum\limits_{(j,j') \in {\cal C}_i}
  \max \{ 2\cdot Q_{j,\piB (j)} , Q_{j',\piB (j')} \} \\
  &\leq& \sum\limits_{{\cal C}_i} \sum\limits_{(j,j') \in {\cal C}_i}
  \left( 2\cdot Q_{j,\piB (j)} + Q_{j',\piB (j')} \right) \\
  &=& \sum\limits_{j\in {\cal C''}} 3 Q_{j,\piB (j)}
\end{eqnarray*}
\end{proof}

The following theorem shows that the approximation factor given in
Theorem~\ref{3approx} is tight. 

\begin{theorem}\label{epsilon}
 For any $0<\epsilon<1$, there exist instances of the Balanced Color Assignment
 Problem such that $COST_B$ is a factor $3-4\epsilon/(4\delta +\epsilon)$
 larger than the  optimal cost, for some $0 <\delta < 1$.
\end{theorem}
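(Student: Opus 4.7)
The plan is to construct, for every $\epsilon\in(0,1)$, an explicit small instance of the problem on which Algorithm {\tt Balance} pays almost exactly three times the optimum, with the precise ratio matching the formula $3-4\epsilon/(4\delta+\epsilon)$ for a suitable $\delta\in(0,1)$. The natural candidate is a minimal instance with $n=2$ agents and $m=2$ colors, using the weights
\[
Q_{0,0}=4\delta,\qquad Q_{0,1}=8\delta-\epsilon,\qquad Q_{1,0}=4\delta,\qquad Q_{1,1}=\epsilon,
\]
where $\delta$ is chosen so that $8\delta$ is a power of two (e.g.\ $\delta=1/4$, giving $8\delta=2$). With this choice the maximum weight is $p=8\delta-\epsilon$, Phase~2 returns $p'=8\delta$ exactly, and the first interval of Phase~3 is $I_0=[4\delta,8\delta)$. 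By design, the three ``large'' weights $4\delta,4\delta,8\delta-\epsilon$ all belong to $I_0$, while the ``small'' weight $\epsilon$ falls in the final interval $[0,1)$.

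Next I would trace Phase~3. In stage $r=0$ the leader $a_0$ has the candidate list $\mathcal{L}_{0,0}=\{c_0,c_1\}$ (both weights are in $I_0$), while $a_1$ has $\mathcal{L}_{1,0}=\{c_0\}$. Since $\mathcal{K}_0=1$, the leader self-assigns exactly one color; under an index-ordered (equivalently, adversarial) tie-break consistent with the pseudocode of Algorithm~\ref{B3.2}, $a_0$ picks $c_0$. When $a_1$'s turn arrives in the same stage, the message $\mathcal{M}=\{c_0\}$ has already consumed its whole list, so $a_1$ takes nothing in stage~$0$ and eventually picks up $c_1$ from the weight-$\epsilon$ interval in the final stage. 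Hence $\piB(0)=0,\ \piB(1)=1$.

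At this point the two costs can be read off directly: $\costB=Q_{0,1}+Q_{1,0}=(8\delta-\epsilon)+4\delta=12\delta-\epsilon$, whereas the swapped assignment $\piO(0)=1,\ \piO(1)=0$ is feasible and yields
\[
\costO\le Q_{0,0}+Q_{1,1}=4\delta+\epsilon.
\]
A one-line algebraic manipulation then gives
\[
\frac{\costB}{\costO}\ \ge\ \frac{12\delta-\epsilon}{4\delta+\epsilon}\ =\ 3-\frac{4\epsilon}{4\delta+\epsilon},
\]
which is the claimed bound with $\delta=1/4\in(0,1)$.

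The main delicate point I expect is guaranteeing that this ``bad'' execution actually materialises under {\tt Balance}. Two conditions are required: (i) all three large weights must sit in a single interval $I_r$ so that the near-$2$:$1$ ratio allowed by Lemma~\ref{stessostage} is realised within one stage, which is exactly why $\delta$ is chosen so that $8\delta$ is a power of two and Phase~2 returns $p'=8\delta$; and (ii) the ``partner'' weight $Q_{1,1}=\epsilon$ must fall strictly later, so that no subsequent stage can repair the wrong assignment. In the limit $\epsilon\to 0$ the construction saturates both extremal conditions isolated in the proof of Theorem~\ref{3approx}: the bound $Q_{j,\piO(j)}\le 2Q_{j,\piB(j)}$ of Lemma~\ref{massimo} is attained up to $O(\epsilon)$ on the pair $(j_1,j_2)=(0,1)$, while the other term $Q_{k,\piB(k)}=\epsilon$ in the $\max\{2a,b\}\le 2a+b$ estimate becomes negligible, so that this last inequality collapses to equality and certifies tightness of the $3$-approximation.
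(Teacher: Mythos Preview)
Your approach is the same as the paper's in spirit: build a two-agent/two-colour gadget in which three ``large'' weights land in a single Phase~3 bucket while the fourth is tiny, so that the greedy pick inside that bucket goes against the optimum and the ratio comes out as $(12\delta-\epsilon)/(4\delta+\epsilon)=3-4\epsilon/(4\delta+\epsilon)$. The paper replicates the same gadget over $n/2$ independent pairs with weights $q(\delta+\epsilon/4),\,q,\,q(2\delta-\epsilon/4),\,0$; you collapse it to a single pair with weights $4\delta,\,4\delta,\,8\delta-\epsilon,\,\epsilon$. Your version is more compact and the algebra is cleaner.

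Two technical points need patching, however. First, with $\delta=1/4$ your weights are $1,\,2-\epsilon,\,1,\,\epsilon$, which are not integers for generic $\epsilon\in(0,1)$; since the $Q_{j,i}$ are item counts, this violates the model. The paper handles integrality by introducing a scale factor $q$ with $q\epsilon/4\in\mathbb{Z}$ and setting $q\delta=\lfloor q\rfloor$; you need an analogous device (and must check that the scaling preserves the power-of-two alignment you rely on for $p'=8\delta$). Second, in your instance $a_0$ faces an \emph{exact} tie, $Q_{0,0}=Q_{1,0}=4\delta$, so the bad execution rests entirely on an adversarial tie-break; if the tie goes the other way the algorithm is optimal. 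The paper's weights instead satisfy $Q_{2i,2i}=q(\delta+\epsilon/4)>q=Q_{2i+1,2i}$, so that under a highest-weight-first rule the wrong colour is \emph{forced} on $a_{2i}$ rather than merely permitted. Both issues are easy to repair (e.g.\ perturb $Q_{0,0}$ upward by a unit and rescale), but as written they are genuine gaps.
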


\begin{proof} Consider the following instance of the balanced color assignment
problem. For the sake of presentation, we assume that  $m=n$ and that
$n$ is even, but it is straightforward to extend the proof to the
general case. 

Fix any rational $\epsilon>0$, and let $q,\delta >0$ be such that
$q\epsilon/4$ is an integer and $q\delta = \lfloor q \rfloor$. 
Consider an instance of the problem such that colors are distributed as follows:

\[ \left\{ \begin{array}{lcl}
                          Q_{2i,{2i}}&=&q(\delta+\epsilon/4)  \\
                          Q_{2i+1,2i}&=&q\\
 Q_{2i,2i+1}&=&q(2\delta -\epsilon/4)\\
 Q_{2i+1,2i+1}&=&0,
  \end{array}
   \right.
\,\,\,\,\,\,\,  i=0,1,\ldots,\frac{n}{2}-1
\]
and that $a_0$ is the leader elected in the first stage of
algorithm {\tt Balance}, and that the labels assigned to agents
$a_1,\ldots,a_{n-1}$ are $1,\ldots ,n-1$, respectively. 

Consider agents $a_{2i}$ and $a_{2i+1}$, for any $0\leq i\leq
\frac{n}{2}-1$. We can always assume that $q$ is such that 

$$\frac {{\hat p}}{2^{r+1}}\leq q \delta < q(\delta+\epsilon/4)< q(2\delta -\epsilon/4) < {\frac
{{\hat p}}{2^r}},$$ 

for some $r$. That is, the weights
of color $c_{2i}$ for agents $a_{2i}$ and $a_{2i+1}$ belong to the
same interval $\left[ {\hat p}/2^{r+1}, {\hat p}/{2^r} \right)$.

It is easy to see that the optimal assignment gives $c_{2i+1}$ to
$a_{2i}$ and $c_{2i}$ to $a_{2i+1}$. The corresponding cost is
$\costO=\frac{n}{2}q(\delta+\epsilon/4)$. On the other hand, algorithm
{\tt Balance} assigns $c_{2i}$ to $a_{2i}$ and $c_{2i+1}$ to $a_{2i+1}$,
with a corresponding cost
$\costB=\frac{n}{2}q(3\delta-\epsilon/4)$. Hence, 
for the approximation factor, we get

\[ \frac{\costB}{\costO} = \frac{3\delta
  -\frac{\epsilon}{4}}{\delta+\frac{\epsilon}{4}}
 = \frac{3\left(\delta + \frac{\epsilon}{4} \right)}{\delta +\frac{\epsilon}{4} } 
- \frac{\frac{3\epsilon}{4} + \frac{\epsilon}{4}}{\delta +\frac{\epsilon}{4} } =  
3 -\frac{4\epsilon}{4\delta +\epsilon} .\]  
\end{proof}

\elimina{

\begin{figure}[htbp]
\label{fig:ring6}
\centerline{\epsfig{file=figura_nn.eps, width=5in}}
\caption{ Example. Six agents ring 
  with initial color distribution: $Q_{2i,2i}=
  Q_{2i+1,2i}= 1$, $Q_{2i,2i+1}=2-\epsilon$ and
  $Q_{2i+1,2i+1}=0$ for $i=0,1,2,3,4$.
In boxes we have the assignment of colors determined by 
algorithm {\tt Balance} (left hand side) and by algorithm {\tt B'}:
the top entry (bottom entry) contains the weight 
correspondent to the assigned color (discarded color). We have $Cost_{\tt B} \geq
(3-\epsilon) Cost_{\tt B'}$. {\tt CONTROLLARE LA FIGURA!!}
}
\end{figure}
}

Even if the approximability result is tight, if we are willing to pay
something in message complexity, we can get a $2$-approximation
algorithm. 

\begin{corollary}
Algorithm  {\tt Balance} can be transformed
  into a $2$-approximation algorithm, by paying an additional
  multiplicative $O(\log p )$ factor in message complexity.
\end{corollary}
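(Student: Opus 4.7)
The plan is to modify Phase 3 of {\tt Balance} so that, rather than partitioning colors into the $O(\log\hat p)$ weight intervals $I_r$ and assigning within a stage in label order, the algorithm processes a single color per iteration in globally decreasing order of the entries $Q_{j,i}$. Concretely, the leader repeats the following loop $m$ times: a single ``max-finding'' message is circulated around the ring, and each not-yet-full agent $a_i$ updates it with the largest $Q_{j,i}$ corresponding to an unassigned color $c_j$ in its local list; when the message returns to the leader, the color $c_j$ carrying the overall largest $Q_{j,i}$ is assigned to the corresponding agent, and a short broadcast updates every agent's ``full'' status and list of unassigned colors. Each iteration costs one ring traversal carrying a payload of $O(\log\hat p+\log m)=O(\log\hat p)$ bits (using $m\in O(n^c)$), i.e. $O(n\log\hat p/\log n)$ basic messages; over $m$ iterations this yields $O(mn\log\hat p/\log n)$ basic messages in Phase 3, which is $O(\log p)$ times the $O(mn)$ bound of {\tt Balance}.

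For the approximation ratio, the goal is to sharpen Lemma~\ref{massimo} by removing the factor $2$ inherited from Lemma~\ref{stessostage}. Fix any $j\in\mathcal{C}''$ and let $t_j$ be the iteration at which $c_j$ is assigned (to $\piB(j)$), and $t^*$ the iteration at which $\piO(j)$ acquires its $K_{\piO(j)}$-th (and hence last) color under $\piB$. In Case A ($t_j<t^*$), at iteration $t_j$ the pair $(c_j,\piO(j))$ is still a legitimate candidate and competes with $(c_j,\piB(j))$; since the latter emerges as the global maximum, $Q_{j,\piB(j)}\ge Q_{j,\piO(j)}$. In Case B ($t^*<t_j$), the pair-partner $c_k$ produced by the cyclic decomposition of $\mathcal{C}''$ used in the proof of Theorem~\ref{3approx} (so $\piB(k)=\piO(j)$ and $k\in\mathcal{C}''$) is among the colors assigned to $\piO(j)$ before $t_j$; at its assignment iteration, the pair $(c_j,\piO(j))$ was a legitimate candidate (since $c_j$ was still unassigned and $\piO(j)$ was not yet full), so the global-max property gives $Q_{k,\piB(k)}=Q_{k,\piO(j)}\ge Q_{j,\piO(j)}$. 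In either case
\[
Q_{j,\piO(j)}\le\max\{Q_{j,\piB(j)},\,Q_{k,\piB(k)}\}.
\]

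Summing this bound over $j\in\mathcal{C}''$ and using that each index appears exactly once as a first and once as a second coordinate of the cyclic pairing, we obtain $\sum_{j\in\mathcal{C}''}Q_{j,\piO(j)}\le 2\sum_{j\in\mathcal{C}''}Q_{j,\piB(j)}$. Plugging this into the identity $\costB-\costO=\sum_{j\in\mathcal{C}''}\bigl(Q_{j,\piO(j)}-Q_{j,\piB(j)}\bigr)$ derived in the proof of Theorem~\ref{3approx}, and invoking $\costO\ge\sum_{j\in\mathcal{C}''}Q_{j,\piB(j)}$, yields $\costB\le 2\,\costO$. The main subtlety will be verifying rigorously that in Case B the \emph{specific} pair-partner $c_k$ selected by the cyclic decomposition -- not just some color in $\piO(j)$'s basket -- satisfies the required inequality; this follows from the very definition $\piB(k)=\piO(j)$, which guarantees that $c_k$ was assigned to $\piO(j)$ at an iteration strictly preceding $t_j$, so the decreasing-weight ordering can be applied to the pair $(c_j,\piO(j))$ at that earlier moment.
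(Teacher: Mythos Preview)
Your argument is correct, but your route differs from the paper's. The paper keeps the stage structure of Phase~3 intact and only changes the tie-breaking rule inside a stage: instead of giving a color to the lowest-labelled agent whose weight falls in $I_r$, it gives the color to the agent with the \emph{largest} weight in $I_r$, which is implemented by appending the relevant $Q_{j,i}$ value (an extra $O(\log p)$ bits per color) to the circulating message. This tiny change makes Lemma~\ref{stessostage} hold without the factor~$2$, and then Lemma~\ref{massimo} and Theorem~\ref{3approx} go through verbatim with~$2$ in place of~$3$. In particular, the paper's variant preserves the $O(n\log p)$ running time.

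You instead discard the interval stages entirely and run $m$ global max-finding rounds, assigning one color per round. Your approximation analysis is self-contained and arguably cleaner than the paper's appeal to a ``Lemma~\ref{stessostage} without the~$2$'': your Case~A/Case~B dichotomy directly yields $Q_{j,\piO(j)}\le\max\{Q_{j,\piB(j)},Q_{k,\piB(k)}\}$, and your verification that the \emph{specific} cyclic partner $k$ works in Case~B is exactly right. The price you pay is time: your Phase~3 now takes $\Theta(mn)$ time units rather than $O(n\log p)$, though the corollary only constrains message complexity, so this is acceptable. One small slip: the equality $O(\log\hat p+\log m)=O(\log\hat p)$ does not follow from $m\in O(n^c)$ alone (that only gives $\log m=O(\log n)$); the honest bound on the payload is $O(\log\hat p+\log n)$ bits, i.e.\ $O(\log\hat p/\log n)+O(1)$ basic messages per hop, which still lands you within the claimed $O(\log p)$ multiplicative overhead.
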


\begin{proof}
Algorithm {\tt Balance} is modified in the following way: colors in
stage $r$ of {\sc Step 2} in Phase 3 are assigned to the agent having
the largest number of items (falling in the interval $I_r$) and
not to the one close to the leader. This can be achieved by making the
agent forward on the ring, not only their choice of colors, but also
their $Q_{i,j}$s for those colors. This requires extra $O(\log p)$
bits per color, increasing total message complexity of such a
multiplicative factor.  

For what concerns the approximation factor, this modification to the
algorithm allows to restate the thesis of Lemma~\ref{stessostage}
without the $2$ multiplicative factor and, following the same
reasoning of Theorem~\ref{3approx}, conclude the proof. 

\end{proof}

Finally, if we are not willing to pay extra message complexity, but we are
allowed to  wait for a longer time, we get a $(2+ \epsilon)$-approximation algorithm.

\begin{theorem} 
Assuming $m \in O(n^c)$, for some constant $c$, 
for any $0< \epsilon < 1$, there is a $(2+ \epsilon)$-approximation
algorithm for the Distributed Balanced  Color Assignment Problem with
running time $O(n \log_{1 + \epsilon} p) $ and message complexity
$\Omega (nm)$.
\end{theorem}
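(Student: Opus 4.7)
The plan is to replace the geometric partition of weight values by powers of $2$, used in Algorithm \texttt{Balance}, with a finer partition by powers of $(1+\epsilon)$. Specifically, once the upper bound $\hat p$ on $p$ has been computed in Phase~2, Phase~3 is organized in stages $r=0,1,\ldots,\lceil\log_{1+\epsilon} \hat p\rceil$, using the intervals
\begin{equation*}
  I_0 = \left[\tfrac{\hat p}{1+\epsilon},+\infty\right),\quad
  I_r = \left[\tfrac{\hat p}{(1+\epsilon)^{r+1}},\tfrac{\hat p}{(1+\epsilon)^r}\right) \text{ for } 0<r<\lceil\log_{1+\epsilon}\hat p\rceil,\quad
  I_{\lceil\log_{1+\epsilon}\hat p\rceil}=[0,1).
\end{equation*}
All other aspects of the protocol (greedy self-assignment by order of label, inter-agent synchronization, update of the globally known set of already-assigned colors, etc.) remain unchanged.

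The correctness proof is identical to that of \texttt{Balance}: a color is never assigned twice, and by the last stage every color has been considered by every agent, hence is eventually assigned. For the approximation, I would re-run the argument of Theorem~\ref{3approx} with the new intervals. The only place the ratio $2$ enters is in Lemma~\ref{stessostage}, where the width of each interval gives the bound $Q_{j,i}\le 2 Q_{j,k}$; with intervals of ratio $(1+\epsilon)$, the same argument yields
\begin{equation*}
  Q_{j,i} \le (1+\epsilon)\, Q_{j,k}.
\end{equation*}
Propagating this through Lemma~\ref{massimo} gives $Q_{j,\piO(j)}\le\max\{(1+\epsilon)Q_{j,\piB(j)},Q_{k,\piB(k)}\}$, and the telescoping over the partition $\{\mathcal{C}_i\}$ of ${\cal C}''$ in the proof of Theorem~\ref{3approx} then delivers
\begin{equation*}
  \sum_{j\in{\cal C}''} Q_{j,\piO(j)} \le (1+\epsilon)\sum_{j\in{\cal C}''}Q_{j,\piB(j)} + \sum_{j\in{\cal C}''}Q_{j,\piB(j)} = (2+\epsilon)\sum_{j\in{\cal C}''}Q_{j,\piB(j)},
\end{equation*}
which, combined with the identity on $\mathcal{C}'$, gives $\costB\le(2+\epsilon)\costO$.

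For the complexity bounds, the number of stages in Phase~3 becomes $\lceil\log_{1+\epsilon}\hat p\rceil+1$ in place of $\lceil\log_2\hat p\rceil+1$, and each stage takes $O(n)$ time as before, yielding the claimed $O(n\log_{1+\epsilon}p)$ running time (recall $\hat p\le 2p$ by Lemma~\ref{l:ph2}, so the asymptotic bound is unchanged). For message complexity, the key observation is that the total number of colors assigned across all stages is exactly $m$, independently of how many stages are used; reusing Lemma~\ref{l:worst} and Corollary~\ref{cor:ph3} the total message count is still $O\!\left(nm\cdot\tfrac{\log m}{\log n}\right)=O(nm)$ under the assumption $m\in O(n^c)$, and the $\Omega(nm)$ lower bound of Theorem~\ref{msgcompl} gives the matching lower bound, hence $\Theta(nm)$. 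The only delicate point is that increasing the number of stages does not increase the asymptotic message complexity: the per-stage overhead (the Step~1 probe phase and the $O(n)$ control messages in Step~2) is amortized against the $m$ color assignments, so additional ``empty'' stages in which no agent has weight in $I_r$ must be handled carefully---but they can be skipped exactly as in \texttt{Sync-Balance} (no circulating message in Step~1) or compressed in \texttt{Async-Balance} by piggybacking the emptiness bit on the same ring traversal, keeping the overall count $\Theta(nm)$.
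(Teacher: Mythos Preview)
Your proposal is correct and follows essentially the same route as the paper: replace the base-$2$ geometric intervals by base-$(1+\epsilon)$ intervals, re-derive Lemma~\ref{stessostage} with factor $(1+\epsilon)$ in place of $2$, feed that into Lemma~\ref{massimo}, and re-run the summation of Theorem~\ref{3approx} to obtain the $(2+\epsilon)$ bound; time grows to $O(n\log_{1+\epsilon}p)$ while message complexity is unchanged. If anything, you are more explicit than the paper, both in writing out the telescoping sum over the cycle decomposition of $\mathcal{C}''$ and in arguing why the larger number of stages does not inflate the message count (the paper simply asserts ``the message complexity is not affected by these changes'' and restricts attention to \texttt{Sync-Balance}).
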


\begin{proof} 
Modify the two interval threshold values of algorithm {\tt Sync-Balance} in the
following way: 
$$l_r= \left\{ {\frac {{\hat p}}{(1 + \epsilon)^{r+1}}}
\right\} \mbox{ and }  u_r = \left\{ {\frac {{\hat p}}{(1 + \epsilon)^r}}
\right\}, $$
and redefine 

\begin{equation}
\label{eq:interval2}
 {\left\{ {\frac {a}{b}}
\right\} } = \left\{ \begin{array}{ll}
                        {\lceil\frac {a}{b}\rceil} & \mbox{ if } {\frac
{a}{b}}>
                        {\frac {1}{1+\epsilon}}; \\
                        0 & \mbox{ otherwise.}
                        \end{array} \right.
\end{equation}

Accordingly, the statement of Lemma~\ref{stessostage}
becomes $Q_{j,i} \leq (1+\epsilon) Q_{j,k}$, and the statement of
Lemma~\ref{massimo} can be rewritten as 

$$ Q_{j,\piO (j)} \leq \max \{ (1+\epsilon)\cdot Q_{j,\piB (j)} ,
Q_{k,\piB (k)} \}.$$

The result on the approximation factor then
follows by the same arguments of the proof of
Theorem~\ref{3approx}. The message complexity is not affected by these
changes, while the running time now depends on the number of stages in
Phase 3, that is $O(\log_{1 + \epsilon} p).$
\end{proof}

\section{Conclusion}

In this paper we have considered the Distributed Balanced Color
Assignment problem, which we showed to be the distributed version of 
different matching problems. In the distributed setting, the problem 
models situations where agents search a common space and need to
rearrange or organize the retrieved data. 

Our results indicate that these kinds of problems can be solved quite
efficiently { on a ring}, and that the loss incurred by
the lack of centralized control is not significant. We have focused
our attention to distributed solutions tailored for a ring of
agents. A natural extension would be to consider different topologies
and analyze how our techniques 
and ideas have to be modified in order to give efficient algorithms in more general
settings. 
We believe that the main ideas contained in this work could be useful
to extend the results even to arbitrary topologies. Indeed, 
an $O(n$ polylog$(n))$ distributed leader election protocol (that is needed in our algorithm)
is also available for arbitrary ad hoc radio networks \cite{CKP}.

For what concerns the ring topology, it is very interesting to note
that the value  $p$ never appears in the message complexity
for the synchronous case (not even if the polynomial relation between
$m$ and $n$ does not hold), while a factor $\log p$ appears in the
asynchronous case. It is still an open question if it is  
possible to devise an asynchronous  algorithm that shows optimal message complexity,
under the same hypothesis of the synchronous one; {\em i.e.}, if it is
possible to eliminate the extra $\log p / \log n$ factor. 

\paragraph{Acknowledgments.} The authors wish to thank Bruno Codenotti
for many helpful comments and discussions.

\end{document}